\numberwithin{equation}{section}
\numberwithin{figure}{section}
\theoremstyle{plain}
\newtheorem{theorem}{\protect\theoremname}
  \theoremstyle{definition}
  \newtheorem{definition}[theorem]{\protect\definitionname}
  \theoremstyle{plain}
  \newtheorem{lemma}[theorem]{\protect\lemmaname}
  \providecommand{\definitionname}{Definition}
  \providecommand{\lemmaname}{Lemma}
\providecommand{\theoremname}{Theorem}
\begin{document}

\title{Long-range Interactions and Network Synchronization}

\author{Ernesto Estrada}%
\address{Department of Mathematics and Statistics, University of Strathclyde, 26 Richmond Street,\\ Glasgow G1 1HX, United Kingdom}

\author{Lucia Valentina Gambuzza}
\address{Department of Electrical, Electronic and Computer Engineering, University of Catania, \\ Viale A. Doria 6, Catania, Italy}

\author{Mattia Frasca}
\address{Department of Electrical, Electronic and Computer Engineering, University of Catania, \\ Viale A. Doria 6, Catania, Italy}

\maketitle

\begin{abstract}
The dynamical behavior of networked complex systems is shaped not only by the direct links among the units, but also by the long-range interactions occurring through the many existing paths connecting the network nodes. In this work, we study how synchronization dynamics is influenced by these long-range interactions, formulating a model of coupled oscillators that incorporates this type of interactions through the use of $d-$path Laplacian matrices. We study synchronizability of these networks by the analysis of the Laplacian spectra, both theoretically and numerically, for real-world networks and artificial models. Our analysis reveals that in all networks long-range interactions improve network synchronizability with an impact that depends on the original structure, for instance it is greater for graphs having a larger diameter. We also investigate the effects of edge removal in graphs with long-range interactions and, as a major result, find that the removal process becomes more critical, since also the long-range influence of the removed link disappears.
\end{abstract}

%% REQUIRED
%\begin{keywords}
%  networks of coupled dynamical systems, long-range interactions, synchronization
%\end{keywords}
%
%% REQUIRED
%\begin{AMS}
%  05C90, 34D06, 34C15
%\end{AMS}

\section{Introduction}

The fact that most of complex systems are networked has made complex networks an important paradigm for studying such systems \cite{newman2003structure,boccaletti2006complex,estrada2012structure}. A complex network is a graph that represents the skeleton of such complex systems, ranging from biological and ecological to social and infrastructural ones. Representing such a variety of systems by a single mathematical object can be considered as a drastic simplification. However, complex networks have been very useful in explaining many properties of complex systems, which has been empirically validated their use for this purpose.
In order to fill the gaps left by this simplified representation of complex systems a few extensions beyond the simple graph have been proposed. They include the use of hypergraphs, multiplexes and multilayer networks \cite{de2013mathematical,kivela2014multilayer}, temporal graphs \cite{holme2012temporal} and more recently the use of simplicial complexes \cite{courtney2016generalized,estrada2017centralities}.

The previously mentioned extensions of complex systems representations try to ameliorate the emphasis paid by graphs on binary relations only. Then, in either of the previously developed representations of complex systems\textemdash hypergraphs, multiplexes, simplicial complexes\textemdash the binary relation is replaced by a unified $k$-ary one, in which the individual is replaced by the group. Just to illustrate one example, in the hypergraph the binary relation between individuals is replaced by the $k$-ary hyperedges in which nodes are assumed to be identical in their connectivity inside this group. Then, an important missing aspect of these representations is how to capture the influence of nodes in a network in a way that gradually decays with the separation of these nodes in the graph.

Recently, such approach has been proposed to account for long-range influences in a network by using extensions of the graph-theoretic concepts of adjacency and connectivity \cite{estrada2013peer,estrada2017path}. In these works, a new paradigm is developed in which a node in a network is not only influenced by its nearest neighbours but by any other node in the graph. However, such influence is gradually tuned by the shortest path distance at which the influencers are from the influencee. It is obvious that this type of representation is not general for any kind of complex systems, as there are cases where such long-range influence does not exists. However, in the case of social systems such kind of long-range influence is certainly manifested in the so-called indirect peers pressure. Individuals in a social group are not only directly influenced by those connected to them but also by those socially close to them.

The approach proposed in \cite{estrada2013peer,estrada2017path} is here applied to study how the long-range influences affect synchronization of the network nodes. In fact, when the units of a network are dynamical systems (for instance, periodic or chaotic oscillators), a collective phenomenon, characterized by the emergence of a common rhythm in all the units and observed in many natural and artificial systems, may emerge as the result of the interactions \cite{strogatz2004sync}. In this context it is known that the topology of the connections plays a fundamental role in determining the characteristics of the synchronous motion, its onset and stability \cite{arenas2008synchronization}. Most of the works on the subject, however, consider interactions as only dictated by direct links, whereas, in this paper, we take into account influences through paths of length grater than one. Other studies \cite{rogers1996phase,marodi2002synchronization,anteneodo2003analytical,li2006phase} have considered the case of oscillators embedded on a geometrical space and coupled with an intensity decaying with the geometric distance between them. In our paper, instead, the oscillators are viewed as the nodes of a network and the intensity of the interactions is weighted by considering the distance of the oscillators as measured by the length of the shortest path connecting them. This scenario has been modelled through the use of $d-$path Laplacian matrices, whose spectra are shown to determine the synchronizability of the system. In particular, we prove how increasingly weighting the long-range interactions always leads to the best scenario possible for synchronization and verify the result on network models and real-world structures. The quantitative impact of the long-range interactions depends on the network topology and on its specific properties such as the diameter, the density and the degree distribution. Finally, we study the effect of edge removal in networks with and without long-range interactions and, as a major finding, observe that it is more critical in the presence of such interactions, still exhibiting a higher synchronizability in this case than when the long-range interactions are not present.

\section{Intuition and Mathematical formulation}

In this section we formulate the general mathematical equations for considering long-range interactions (LRIs) in a system of coupled dynamical units. Let $G=\left(V,E\right)$ be a simple, undirected graph without self-loops having $N$ nodes and $m$ edges. Let us now write the equations of $N$ coupled dynamical units on a graph. Let us consider $N$ identical oscillators coupled with a coupling constant $\sigma$, where the oscillator $i$ has state variables $\mathbf{x}_{i}\in\mathcal{R}^{n}$.
Then,

\begin{equation}
\dot{\mathbf{x}}_{i}=f(\mathbf{x}_{i})+\sigma\sum_{\left(i,j\right)\in E}\left(H(\mathbf{x}_{j})-H(\mathbf{x}_{i})\right),\label{eq:coupled}
\end{equation}

\noindent where $f(\mathbf{x}_{i}):\mathcal{R}^{n}\rightarrow \mathcal{R}^{n}$ represents the uncoupled dynamics, and $H(\mathbf{x}_{j}):\mathcal{R}^{n}\rightarrow \mathcal{R}^{n}$ the coupling function.

In matrix form Eqs. (\ref{eq:coupled}) can be written as follow

\begin{equation}
\dot{\vec{\mathbf{x}}}=\vec{f}(\vec{\mathbf{x}})-\sigma\left(\nabla\cdot \nabla^{T}\otimes \mathrm{I}_n\right)\cdot \vec{H}(\vec{\mathbf{x}}),
\end{equation}
where $\vec{\mathbf{x}}=\left[\mathbf{x}_{1}~\mathbf{x}_{2}\ldots\mathbf{x}_{N}\right]^{T}$, $\vec{f}(\vec{\mathbf{x}})=\left[f(\mathbf{x}_{1})~f(\mathbf{x}_{2})\ldots f(\mathbf{x}_{N})\right]^{T}$, $\mathrm{I}_n$ indicates the identity matrix of order $n$, and $\vec{H}(\vec{\mathbf{x}})=\left[H(\mathbf{x}_{1})~H(\mathbf{x}_{2})\ldots H(\mathbf{x}_{N})\right]^{T}$. The entries of the node-to-edges incidence matrix $\nabla\in\mathbb{R}^{N\times m}$
of the graph are defined as
\noindent \begin{flushleft}
\begin{equation}
\nabla_{ij}=\left\{ \begin{array}{r}
1\\
-1\\
0
\end{array}\right.\begin{array}{l}
\textnormal{if node \ensuremath{i} is the head of the edge \ensuremath{j}},\\
\textnormal{if node \ensuremath{i} is the tail of edge \ensuremath{j,}}\\
\textnormal{otherwise.}
\end{array}
\end{equation}
\par\end{flushleft}

We recall that the matrix $L=\nabla\nabla^{T}$ is known as the Laplacian
matrix of the graph and has entries
\noindent \begin{flushleft}
\begin{equation}
L_{ij}=\left\{ \begin{array}{r}
k_{i}\\
-1\\
0
\end{array}\right.\begin{array}{l}
\textnormal{if \ensuremath{i=j} },\\
\textnormal{if \ensuremath{\left(i,j\right)\in E}, }\\
\textnormal{otherwise,}
\end{array}
\end{equation}
where $k_{i}$ is the degree of the node $i$, the number of nodes
adjacent to it. It is worth noting that the Laplacian matrix is related
to the adjacency matrix of the graph via: $L=K-A$, where $K$ is
the diagonal matrix of degrees and the adjacency matrix has entries
\par\end{flushleft}

\begin{equation}
A_{ij}=\left\{ \begin{array}{c}
1\\
0
\end{array}\right.\begin{array}{c}
\textnormal{if}\ \left(i,j\right)\in E,\\
\textnormal{otherwise}.
\end{array}
\end{equation}

Now, let us consider the following scenario of $N$ oscillators which
are coupled according to a graph $G$ with a coupling strength $\sigma_{1}$.
Consider that the oscillators can also couple in a weaker way if they
are separated at a shortest path distance two. Let $\sigma_{2}$ be
the strength of the coupling between those oscillators, such that
$\sigma_{2}<\sigma_{1}$. In a similar way we can consider that oscillators
at a given shortest path distance can couple together with a coupling
strength which depends on their separation in the network. For instance,
pairs of oscillators at distance three can couple with strength $\sigma_{3}<\sigma_{2}<\sigma_{1}$.
This situation may well represents social scenarios as illustrated
in Figure \ref{levels of influence}. In a social network, individuals
are connected by certain social ties, such as friendship, collaboration,
etc. Then, two individual directly connected to each other can influence
each other in a relatively strong way. However, an individual in this
network can also receive certain influence from others which are not
directly connected to her. This influence is supposed to be smaller
than the ones received from direct acquaintances, but not one that
can be discarded at all. Let us consider a simple example of a scientific
collaboration network. Two individuals are connected in this network
if they have collaborated on a certain topic, e.g., they have published
a paper together. It is clear that they have influenced each other
in terms of their scientific styles. However, these two scientists
are also influenced by individuals which are close to their topic
of research although they have not collaborated together. This closeness
is reflected in the fact that they are relatively close in this collaboration
network in terms of shortest path distance. An individual from a completely
different field is expected to be far from them in terms of the shortest path
distance, and so to have a lower influence in their scientific styles.

\begin{figure}
\begin{centering}
\includegraphics[width=0.45\textwidth]{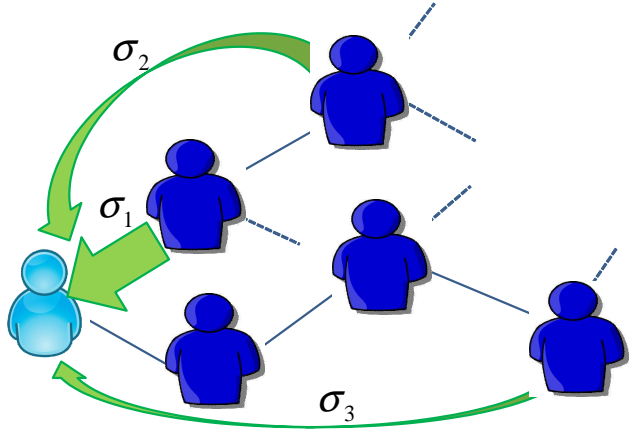}
\par\end{centering}
\caption{Schematic representation of long-range influences in a social group.
It is assumed that $\sigma_{d_{max}}<\cdots<\sigma_{3}<\sigma_{2}<\sigma_{1},$
where $d_{max}$ is the diameter of the graph. }
\label{levels of influence}
\end{figure}

In mathematical terms this intuition can be formulated as follows.
Let $\sigma_{1},\sigma_{2},\ldots,\sigma_{d_{max}}$ account for the
strength of the cupling between pairs of oscillators separated by
shortest path distances of one, two, and so forth up to the diameter
of the graph, $d_{max}$. Then, we can write

\begin{equation}
\begin{array}{lll}
\dot{\mathbf{x}}_{i}  & = & f(\mathbf{x}_{i})+\sigma_{1}\sum\limits_{\left(i,j\right)\in E}\left(H(\mathbf{x}_{j})-H(\mathbf{x}_{i})\right)+\sigma_{2}\sum\limits_{d\left(i,k\right)=2}\left(H(\mathbf{x}_{k})-H(\mathbf{x}_{i})\right)+\cdots\\
& & \cdots+ \sigma_{d_{max}}\sum\limits_{d\left(i,r\right)=d_{max}}\left(H(\mathbf{x}_{r})-H(\mathbf{x}_{i})\right).
\end{array}
\end{equation}

%\begin{equation}
%\resizebox{.95\textwidth}{!}{
%$
%\dot{\mathbf{x}}_{i}  =  f(\mathbf{x}_{i})+\sigma_{1}\sum_{\left(i,j\right)\in E}\left(H(\mathbf{x}_{j})-H(\mathbf{x}_{i})\right)+\sigma_{2}\sum_{d\left(i,k\right)=2}\left(H(\mathbf{x}_{k})-H(\mathbf{x}_{i})\right)+\cdots + \sigma_{d_{max}}\sum_{d\left(i,r\right)=d_{max}}\left(H(\mathbf{x}_{r})-H(\mathbf{x}_{i})\right).
%$
%}
%\end{equation}

The simplest way to account for these long-range influences is by
considering that the coupling strength between individuals decays as
certain law of the shortest path distance. That is, if we consider
a power-law decay of the strength of coupling with the distance we
get, assuming that $\sigma_{1}=\sigma$ and that the rest are just
a fraction of it,

\begin{equation}
\begin{array}{lll}
\dot{\mathbf{x}}_{i} & = & f(\mathbf{x}_{i})+\sigma\sum\limits_{\left(i,j\right)\in E}\left(H(\mathbf{x}_{j})-H(\mathbf{x}_{i})\right)+\sigma2^{-s}\sum\limits_{d\left(i,k\right)=2}\left(H(\mathbf{x}_{k})-H(\mathbf{x}_{i})\right)+\cdots\\ & & \cdots +\sigma d_{max}^{-s}\sum\limits_{d\left(i,r\right)=d_{max}}\left(H(\mathbf{x}_{r})-H(\mathbf{x}_{i})\right).
\end{array}
\end{equation}

Similarly, if we assume an exponential decay we obtain

\begin{equation}
\begin{array}{lll}
\dot{\mathbf{x}}_{i}& = & f(\mathbf{x}_{i})+\sigma\sum\limits_{\left(i,j\right)\in E}\left(H(\mathbf{x}_{j})-H(\mathbf{x}_{i})\right)+\sigma e^{-2\lambda}\sum\limits_{d\left(i,k\right)=2}\left(H(\mathbf{x}_{k})-H(\mathbf{x}_{i})\right)+\cdots\\ & & \cdots+\sigma e^{-d_{max}\lambda}\sum\limits_{d\left(i,r\right)=d_{max}}\left(H(\mathbf{x}_{r})-H(\mathbf{x}_{i})\right).
\end{array}
\end{equation}

We call these equations the Mellin and Laplace transforms, respectively,
of the $N$ coupled dynamical units on a graph.

Let us now define the $d$-path incidence matrices which account for
these coupling of non-nearest-neighbours in the graph. Let $P_{l,ij}$
denote a shortest path of length $l$ between $i$ and $j$. The nodes
$i$ and $j$ are called the endpoints of the path $P_{l,ij}$.
Because there could be more than one shortest path of length $l$
between the nodes $i$ and $j$ we introduce the following concept.
The irreducible set of shortest paths of length $l$ in the graph
is the set $P_{l}=\left\{ P_{l,ij},P_{l,ir},...,P_{l,st}\right\} $
in which the endpoints of every shortest path in the set are different.
Every shortest path in this set is called an irreducible shortest path.
Let $d_{max}$ be the graph diameter, i.e., the maximum shortest path
distance in the graph.
\begin{definition}
Let $d\leq d_{max}$. The $d$-path incidence matrix, denoted by $\nabla_{d}\in\mathbb{R}^{n\times p}$,
of a connected graph of $N$ nodes and $p$ irreducible shortest paths
of length $d$ is defined as:

\begin{equation}
\nabla_{d,ij}=\left\{ \begin{array}{r}
1\\
-1\\
0
\end{array}\right.\begin{array}{l}
\textnormal{if node \ensuremath{i} is the head of the irreducible shortest path \ensuremath{j}},\\
\textnormal{if node \ensuremath{i} is the tail of the irreducible shortest path \ensuremath{j}},\\
\textnormal{otherwise.}
\end{array}
\end{equation}
\end{definition}

Obviously $\nabla_{1}=\nabla.$ Let us now rewrite our Mellin and
Laplace transformed equations, respectively, in matrix-vector form using the $d$-path
incidence matrix as follow

\begin{equation}
\dot{\vec{\mathbf{x}}}=\vec{f}(\vec{\mathbf{x}})-\sigma\left(\sum_{d=1}^{d_{max}}d^{-s}\left(\nabla_{d}\cdot\nabla_{d}^{T}\otimes \mathrm{I}_n \right)\cdot\vec{H}(\vec{\mathbf{x}})\right),
\end{equation}

\begin{equation}
\dot{\vec{\mathbf{x}}}=\vec{f}(\vec{\mathbf{x}})-\sigma\left(\nabla\cdot\nabla^{T}\otimes \mathrm{I}_n \right)\cdot\vec{H}(\vec{\mathbf{x}})-\sigma\left(\sum_{d=2}^{d_{max}} e^{-\lambda d}\left(\nabla_{d}\cdot\nabla_{d}^{T}\otimes \mathrm{I}_n\right)\cdot \vec{H}(\vec{\mathbf{x}})\right).\label{eq:Laplacetrasformed}
\end{equation}

\noindent The parameters $s$ and $\lambda$ account for the strength of
the coupling of the oscillators at a given distance. The smallest
the values of these parameters the stronger the coupling between the
oscillators at a given distance. For instance when $s\rightarrow\infty$
($\lambda\rightarrow\infty$) there is a very weak influence of non-nearest
neighbours and we recover the classical model in which there is no
long-range coupling. When $s\rightarrow0$ ($\lambda\rightarrow0$)
the strength of the coupling between oscillators at any distance is
the same, which corresponds to the situation of the classical model
of coupled oscillators on a complete graph $K_{n}$. Thus, in every
case we always recover the original model of coupled oscillators on
graphs for large values of the parameters in the transforms of the
$d$-path incidence matrices and we approach the coupling on a complete
graph when these parameters tend to zero.

Note that in our approach the coupling strength depends on the shortest path distance. This differs from the notion of accessability matrix \cite{lentz2013unfolding}, which accounts for the existence of paths of arbitrary length between the nodes with unitary weights.

\subsection{Example. Long-range interactions in the Kuramoto model}

In this section we particularize the equations for the coupled system
to the case of phase oscillators, so that to consider LRIs in the Kuramoto model \cite{rodrigues2016kuramoto}. Let us consider $N$ phase oscillators on graph $G=\left(V,E\right)$ coupled with
an identical coupling constant $\sigma$, where the oscillator $i$
has phase $\theta_{i}$ and intrinsic frequency $\omega_{i}$. Then,

\begin{equation}
\dot{\theta}_{i}=\omega_{i}+\sigma\sum_{\left(i,j\right)\in E}\sin\left(\theta_{j}-\theta_{i}\right),
\end{equation}

\noindent or in matrix form

\begin{equation}
\dot{\vec{\theta}}=\vec{\omega}-\sigma\nabla\cdot\sin\left(\nabla^{T}\vec{\theta}\right).
\end{equation}

The consideration of LRIs in the way we have described previously will
give rise to the following transforms of the Kuramoto model:

\begin{equation}
\dot{\theta}_{i}  =  \omega_{i}+\sigma\sum_{\left(i,j\right)\in E}\sin\left(\theta_{j}-\theta_{i}\right)+\sigma2^{-s}\sum_{d\left(i,k\right)=2}\sin\left(\theta_{k}-\theta_{i}\right)+\cdots +\sigma d_{max}^{-s}\sum_{d\left(i,r\right)=d_{max}}\sin\left(\theta_{r}-\theta_{i}\right),
\end{equation}

\noindent for the Mellin transform, and

\begin{equation}
\begin{array}{lll}
\dot{\theta}_{i}& = & \omega_{i}+\sigma\sum\limits_{\left(i,j\right)\in E}\sin\left(\theta_{j}-\theta_{i}\right)+\sigma e^{-2\lambda}\sum\limits_{d\left(i,k\right)=2}\sin\left(\theta_{k}-\theta_{i}\right)+\cdots\\ & & \cdots+\sigma e^{-d_{max}\lambda}\sum\limits_{d\left(i,r\right)=d_{max}}\sin\left(\theta_{r}-\theta_{i}\right),
\end{array}
\end{equation}

\noindent for the Laplace transform. In matrix-vector form they are given by

\begin{equation}
\dot{\vec{\theta}}=\vec{\omega}-\sigma\left(\sum_{d=1}^{d_{max}}d^{-s}\nabla_{d}\cdot\sin\left(\nabla_{d}^{T}\vec{\theta}\right)\right),\label{eq:Mellin}
\end{equation}

\noindent for the Mellin transform, and

\begin{equation}
\dot{\vec{\theta}}=\vec{\omega}-\sigma\left(\nabla\cdot\sin\left(\nabla^{T}\vec{\theta}\right)\right)-\sigma\left(\sum_{d=2}^{d_{max}} e^{-\lambda d}\nabla_{d}\cdot\sin\left(\nabla_{d}^{T}\vec{\theta}\right)\right).\label{eq:Laplace}
\end{equation}
\noindent for the Laplace transform. Here again when $s\rightarrow\infty$ ($\lambda\rightarrow\infty$)
the coupling between oscillators at distance larger than one is almost
zero and we recover the classical Kuramoto model where there is no
LRIs. On the other hand, when $s\rightarrow0$ ($\lambda\rightarrow0$)
the strength of the coupling between oscillators at any distance is
the same and we obtain the Kuramoto model on a complete graph $K_{N}$.

\section{Synchronizability and Laplacian Spectra}

The Laplacian matrix of the graph is positive semi-definite with eigenvalues
denoted by: $0=\lambda_{1}\leq\lambda_{2}\leq\cdots\leq\lambda_{N}$.
If the network is connected, the multiplicity of the zero eigenvalue
is equal to one, i.e., $0=\lambda_{1}<\lambda_{2}\leq\cdots\leq\lambda_{N}$,
and the smallest nontrivial eigenvalue $\lambda_{2}$ is known as
the \textit{algebraic connectivity} of the network. It is now well
known that there are two types of networks with bounded and unbounded
synchronization regions in the parameter space. One large class of
dynamic networks have an unbounded synchronized region specified by
\begin{equation}
\sigma\lambda_{2}>\alpha_{1}>0,\label{unbounded}
\end{equation}
where constant $\alpha_{1}$ depends only on the node dynamics, a
bigger spectral gap $\lambda_{2}$ implies a better network synchronizability,
namely a smaller coupling strength $\sigma>0$ is needed~\cite{chen2008network,duan2007complex,duan2008network,lu2004characterizing}.

Another large class of dynamic networks have a bounded synchronized
region specified by
\begin{equation}
\sigma\lambda_{2},\ldots,\sigma\lambda_{N}\in(\alpha_{2},\alpha_{3})\subset(0,\infty),\label{bounded}
\end{equation}
where constants $\alpha_{2},\,\alpha_{3}$ depend only on the node
dynamics as well, and a bigger eigenratio $\lambda_{2}/\lambda_{n}$
implies a better network synchronizability, which likewise means a
smaller coupling strength is needed~\cite{pecora1998master,huang2009generic}.

Here we only consider the latter criterion while the former can be
discussed similarly. As introduced, in this scenario the synchronizability
of the graph depends on the eigenratio of the second smallest and the
largest eigenvalues of the Laplacian matrix:
\begin{equation}
Q=\dfrac{\lambda_{2}}{\lambda_{N}}.\label{eigenratio}
\end{equation}

Now, we have generalized the Laplacian matrix to the so-called $d$-path
Laplacian matrices which are defined as follow.
\begin{definition}
Let $d\leq d_{max}$. The $d$-path Laplacian matrix, denoted by $L_{d}\in\mathbb{R}^{n\times n}$,
of a connected graph of $n$ nodes is defined as:

\begin{equation}
L_{d,ij}=\left\{ \begin{array}{r}
\delta_{k}\left(i\right)\\
-1\\
0
\end{array}\right.\begin{array}{l}
\textnormal{if \ensuremath{i=j} },\\
\textnormal{if \ensuremath{d_{ij}=d} }\\
\textnormal{otherwise,}
\end{array}
\end{equation}
where $\delta_{k}\left(i\right)$ is the number of irreducible shortest paths
of length $d$ that are originated at node $i$, i.e., its $d$-path
degree. It is straightforward to realize that $L_{d}=\nabla_{d}\nabla_{d}^{T}$
in a similar way as it happens for the graph Laplacian.
\end{definition}
We now extend the notion of a connected component of a graph to the
$d$-connected component.
\begin{definition}
Let $G=\left(V,E\right)$ be a simple graph and let $d\leq d_{max}$.
A $d$-connected component of $G$ is a subgraph $G'=\left(V',E'\right)$,
$V'\subseteq V$, $E'\subseteq E$, such that for all $p,q\in V'$
there is a shortest path of length $d\left(p,q\right)=d$. If the
$d$-connected component includes all the vertices and edges of $G$,
the graph is said to be $d$-connected.
\end{definition}

The following is an important property of the $d$-path Laplacian
matrix proved by Estrada in 2012 \cite{estrada2012path}.

\begin{theorem}
Let $G$ be a simple graph and let $d\leq d_{max}$. The matrix \textup{$L_{d}$}
is positive semidefinite. Moreover, the multiplicity of the zero eigenvalue
of $L_{d}$ is equal to the number of $d$-connected components of
the graph.
\end{theorem}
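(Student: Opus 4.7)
The plan is to mimic the classical argument for the ordinary graph Laplacian $L=\nabla\nabla^{T}$, substituting the $d$-path incidence matrix $\nabla_{d}$ for $\nabla$ and the relation ``$d_{G}(i,j)=d$'' for adjacency. The factorization $L_{d}=\nabla_{d}\nabla_{d}^{T}$ stated just before the theorem will be my main lever, so the proof reduces to an analysis of the kernel of $\nabla_{d}^{T}$.

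Step one is positive semidefiniteness. For any $x\in\mathbb{R}^{N}$, the factorization gives $x^{T}L_{d}x=x^{T}\nabla_{d}\nabla_{d}^{T}x=\|\nabla_{d}^{T}x\|^{2}\geq 0$, so $L_{d}\succeq 0$. Since $L_{d}$ is real symmetric, a vector lies in $\ker L_{d}$ precisely when $x^{T}L_{d}x=0$, equivalently when $\nabla_{d}^{T}x=0$. This reduces everything to describing the solution set of the linear system $\nabla_{d}^{T}x=0$.

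Step two is to translate $\nabla_{d}^{T}x=0$ into a combinatorial condition. Each column of $\nabla_{d}$ is indexed by an irreducible shortest path of length $d$ with head $i$ and tail $j$, and its contribution to the corresponding entry of $\nabla_{d}^{T}x$ is $x_{i}-x_{j}$. Hence $\nabla_{d}^{T}x=0$ is equivalent to $x_{i}=x_{j}$ for every pair of vertices $(i,j)$ with $d_{G}(i,j)=d$. Note that exactly which irreducible representative is chosen among the shortest paths between $i$ and $j$ is irrelevant: the constraint depends only on the endpoints.

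Step three packages this into a count of $d$-connected components. I interpret a $d$-connected component as an equivalence class under the transitive closure of the relation $i\sim_{d}j \iff d_{G}(i,j)=d$; this is the reading that makes the definition partition $V$, and it is the natural one for the spectral statement. Iterating the pointwise equality from step two along any chain $p=v_{0},v_{1},\dots,v_{k}=q$ with $d_{G}(v_{r-1},v_{r})=d$ gives $x_{p}=x_{q}$, so every $x\in\ker L_{d}$ is constant on each $d$-connected component; conversely, any vector that is constant on each component satisfies the constraints and hence lies in $\ker L_{d}$. The indicator vectors of the components therefore form a basis of $\ker L_{d}$, and its dimension equals the number of $d$-connected components.

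The one delicate point, and the place I would take most care, is step three: one must fix the combinatorial meaning of ``$d$-connected component'' so that the components do partition $V$ and so that chains of length-$d$ shortest paths can be used to transport equalities across a component. Once that interpretation is in place, the rest is a direct transcription of the standard Laplacian proof and presents no real obstacle.
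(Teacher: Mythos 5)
Your proof is correct, and it is the standard argument; note that the paper itself does not prove this theorem but only cites Estrada's 2012 work on path Laplacians, where essentially this same kernel analysis of $L_{d}=\nabla_{d}\nabla_{d}^{T}$ is carried out. The one substantive issue you rightly isolate is the meaning of ``$d$-connected component'': the paper's Definition~3 literally asks that \emph{every} pair $p,q\in V'$ satisfy $d(p,q)=d$, which in general neither partitions $V$ nor makes the theorem true (e.g.\ for the path on five vertices the kernel of $L_{2}$ is two-dimensional, with $\{1,3,5\}$ forced to a common value even though $d(1,5)=4$). Your transitive-closure reading --- connected components of the auxiliary graph whose edges join pairs at distance exactly $d$ --- is the one under which the statement holds and is consistent with the paper's $C_{5}$/$C_{6}$ example, and it is the definition used in the cited source. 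With that reading fixed, your steps (PSD from the factorization, $\ker L_{d}=\ker\nabla_{d}^{T}$, constancy on components) are all sound and complete.
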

The previous result has an important implication for the study of
synchronization on graphs using the $d$-path Laplacian matrix. Although
a graph can be connected in the usual way\textemdash here it should
be said 1-connected\textemdash it not necessarily should be $d$-connected.
Then, a synchronization process taking place between agents separated
at distance $d$ in that graph may not give rise to a unique synchronized
state. For instance, a synchronization process taking place on the
pairs of nodes separated at distance 2 in the pentagon $C_{5}$ gives
rise to a unique synchronized state because the graph is $2$-connected,
but a similar process on the hexagon $C_{6}$ conduces to two synchronized
states because there are two $2$-connected components in this graph.
Then, the use of the Mellin and Laplace transformed dynamical systems,
like the ones introduced in this work, require a transformation of
the $d$-path Laplacian matrices such that they account for the decay
of the influence of oscillators separated at different distances in
the graph. Then, we have the following.
\begin{definition}
Let $G=\left(V,E\right)$ be a simple connected graph and let $d\leq d_{max}$.
The Mellin and Laplace transformed $d$-path Laplacian matrices of
$G$ are given by
\end{definition}
\noindent \begin{flushleft}
\begin{equation}
\tilde{L_{\tau}}=\left\{ \begin{array}{l}
\sum_{d=1}^{d_{max}}d^{-s}L_{d},\\
L+\sum_{d=2}^{d_{max}} e^{-\lambda d}L_{d},
\end{array}\begin{array}{l}
\textnormal{for~}\tau=\textnormal{Mell},s>0\\
\textnormal{for~}\tau=\textnormal{Lapl},\lambda>0.
\end{array}\right.\label{eq:generalized_operator}
\end{equation}
\par\end{flushleft}

Now, we can interpret the LRI-model in the following way. Consider a simple connected graph $G=\left(V,E\right)$ and the following transformation: $f:G\left(V,E\right)\rightarrow G'\left(V,E',\phi,W\right)$, such that
$E'=\left\{ E\cup\left(p,q\right)\left|p,q\in V,\left(p,q\right)\notin E\right.\right\} $
and $\phi:W\rightarrow E'$ is a surjective mapping that assigns a weight to each of the elements of $E'$. The weights $w_{ij}\in W$ are given by the Mellin or Laplace transforms and they are specific for each graph, that is $w_{ij}=d_{ij}^{-s}$ or $w_{ij}=e^{-\lambda d_{ij}}$ for $d_{ij}>1$, and $w_{ij}=1$ for connected pairs of nodes. The main consequence of this is that we can generalize the synchronizability definition (\ref{eigenratio}) to:

\begin{equation}
Q_{\tau}=\dfrac{\lambda_{2}\left(\tilde{L}_{\tau}\right)}{\lambda_{n}\left(\tilde{L}_{\tau}\right)},
\end{equation}
where $\lambda_{n}\left(\tilde{L}_{\tau}\right)$ and $\lambda_{2}\left(\tilde{L}_{\tau}\right)$
are the largest and second smallest eigenvalues of the transformed $d$-path Laplacian matrices. The most important consequence of this new formulation is that when $s,\lambda\rightarrow\infty$ the weighted graph $G'\left(V,E',\phi,W\right)$ tends to the original graph $G=\left(V,E\right)$.
In addition, when $s,\lambda\rightarrow0$ the weighted graph $G'\left(V,E',\phi,W\right)$ tends to the complete graph with $N$ nodes, $K_{N}$. It is known that the Laplacian eigenvalues of $K_{N}$ are all equal to $N$ but one which is equal to zero. Then, the following result holds.

\begin{lemma}
\label{lemma1} Let $G=\left(V,E\right)$ and let $\tilde{L_{\tau}}$
be the Mellin or Laplace transformed $d$-path Laplacian of the graph.
Then, we have the following behavior of the generalized eigenratio

\begin{equation}
Q_{\tau}\rightarrow\left\{ \begin{array}{c}
Q\\
1
\end{array}\right.\begin{array}{l}
\textnormal{if}\ s,~\lambda\rightarrow\infty,\\
\textnormal{if}\ s,~\ensuremath{\lambda\rightarrow}0.
\end{array}
\end{equation}
\end{lemma}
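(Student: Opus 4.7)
The proof is essentially a continuity-in-parameter argument combined with identifying the two limiting matrices. The plan is to split into the two limits and, in each case, (i) show that the transformed $d$-path Laplacian converges entrywise to a concrete matrix as $s$ or $\lambda$ approaches the appropriate endpoint, and (ii) invoke the continuous dependence of eigenvalues of symmetric matrices on their entries to transfer convergence to $Q_\tau$.

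First I would handle the limit $s,\lambda\to\infty$. Looking at the defining formula \eqref{eq:generalized_operator}, in the Mellin case the coefficient $d^{-s}$ tends to $0$ for every $d\geq 2$ while $1^{-s}=1$, and in the Laplace case $e^{-\lambda d}\to 0$ for every $d\geq 2$. Since $\tilde L_\tau$ is a finite sum (at most $d_{\max}$ terms), entrywise convergence is immediate and one obtains $\tilde L_\tau\to L_1=L$. The map $A\mapsto \lambda_k(A)$ on symmetric matrices is continuous (Weyl's inequalities), so $\lambda_k(\tilde L_\tau)\to\lambda_k(L)$ for every $k$. In particular $\lambda_2(\tilde L_\tau)\to\lambda_2(L)$ and $\lambda_N(\tilde L_\tau)\to\lambda_N(L)$, both of which are strictly positive by connectedness of $G$, so $Q_\tau\to\lambda_2(L)/\lambda_N(L)=Q$.

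Next I would treat the limit $s,\lambda\to 0$. Now $d^{-s}\to 1$ and $e^{-\lambda d}\to 1$ for every $d$ in the finite range $1\le d\le d_{\max}$, so in both cases $\tilde L_\tau\to\sum_{d=1}^{d_{\max}}L_d$. The key algebraic step is identifying this sum. For any off-diagonal pair $i\neq j$ there is exactly one value of $d$ with $d_{ij}=d$ (using that $G$ is connected and $d\le d_{\max}$), so exactly one summand contributes $-1$ to position $(i,j)$ and the rest contribute $0$; thus every off-diagonal entry equals $-1$. On the diagonal, summing the $d$-path degrees over all $d$ counts every other vertex exactly once, giving $N-1$. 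Hence
\begin{equation}
\sum_{d=1}^{d_{\max}} L_d = (N-1)I - (J-I) = NI - J = L(K_N),
\end{equation}
where $J$ is the all-ones matrix. The spectrum of $L(K_N)$ is $\{0,N,N,\ldots,N\}$, so $\lambda_2(L(K_N))=\lambda_N(L(K_N))=N$. Continuity of eigenvalues again gives $\lambda_2(\tilde L_\tau)/\lambda_N(\tilde L_\tau)\to N/N=1$.

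There is no serious obstacle here; the only thing to be slightly careful about is the continuity step, which needs $\lambda_2$ and $\lambda_N$ of the limit matrix to be nonzero (so that the ratio is continuous at the limit), and this holds in both cases: for $s,\lambda\to\infty$ because $G$ is connected (so $\lambda_2(L)>0$), and for $s,\lambda\to 0$ because $\lambda_2(L(K_N))=N>0$. Everything else reduces to the entrywise limits $d^{-s}\to 0,1$ and $e^{-\lambda d}\to 0,1$ and to the combinatorial identity $\sum_d L_d = L(K_N)$.
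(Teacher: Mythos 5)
Your proof is correct and follows essentially the same route as the paper, which justifies the lemma by observing that the weighted graph tends to $G$ as $s,\lambda\to\infty$ and to $K_N$ as $s,\lambda\to 0$, whose Laplacian spectrum is $\{0,N,\ldots,N\}$. You merely make explicit two steps the paper leaves implicit, namely the combinatorial identity $\sum_{d=1}^{d_{\max}}L_d = NI-J$ and the continuity of eigenvalues of symmetric matrices, both of which are correct and strengthen the argument without changing its substance.
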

Similarly, for networks of dynamical units with unbounded synchronized region, if one considers synchronizability as measured by the smallest non-zero eigenvalue of the $d$-path Laplacian normalized by the number of nodes, i.e., $\lambda_{2}/N$, the following lemma holds.
\begin{lemma}
\label{lemma2} Let $G=\left(V,E\right)$ and let $\tilde{L_{\tau}}$ be the Mellin or Laplace transformed $d$-path Laplacian of the graph. Then, we have the following behavior of the normalized smallest eigenvalue

\begin{equation}
\left(\dfrac{\lambda_{2}}{N}\right)_{\tau}\rightarrow\left\{ \begin{array}{cc}
\left(\dfrac{\lambda_{2}}{N}\right) & \textnormal{if}\ s,\lambda\rightarrow\infty\\
1 & \textnormal{\textnormal{if}\ s, \ensuremath{\lambda\rightarrow}0}.
\end{array}\right.
\end{equation}
\end{lemma}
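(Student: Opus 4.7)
My plan is to mirror the argument for Lemma \ref{lemma1}, since both limits follow from two ingredients: continuity of $\tilde{L}_{\tau}$ in the parameter $s$ (or $\lambda$) as a linear combination of the fixed symmetric matrices $L_{d}$, plus continuity of the eigenvalues of a symmetric matrix in its entries. Once these are in place, the result reduces to identifying the pointwise limit matrix in each regime and reading off its second smallest eigenvalue.

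For the regime $s,\lambda\to\infty$, I would note that in the Mellin case $d^{-s}\to 0$ for every $d\geq 2$ while $1^{-s}=1$, and in the Laplace case $e^{-\lambda d}\to 0$ for every $d\geq 2$ while the $d=1$ term is $L$ with weight one. Hence in both cases $\tilde{L}_{\tau}\to L_{1}=L$ entrywise. Continuity of eigenvalues then gives $\lambda_{2}(\tilde{L}_{\tau})\to\lambda_{2}(L)$, and dividing by the constant $N$ yields the first branch.

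For the regime $s,\lambda\to 0$ all weights tend to $1$, so $\tilde{L}_{\tau}\to\sum_{d=1}^{d_{max}}L_{d}$ in both cases. The key combinatorial observation is that, because $G$ is connected, every unordered pair $i\neq j$ has a unique shortest-path distance $d_{ij}\in\{1,\ldots,d_{max}\}$; hence exactly one of the matrices $L_{d}$ contributes $-1$ in the off-diagonal entry $(i,j)$, and the diagonal entry at $i$ becomes $\sum_{d=1}^{d_{max}}\delta_{d}(i)=N-1$. The sum is therefore the Laplacian of the complete graph $K_{N}$, whose spectrum is $\{0\}$ with multiplicity one and $\{N\}$ with multiplicity $N-1$. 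Consequently $\lambda_{2}(\tilde{L}_{\tau})\to N$ and $\lambda_{2}/N\to 1$.

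I do not expect a serious obstacle: the monotonicity of the weighted sum with respect to positive semidefiniteness already ensures that $\lambda_{2}(\tilde{L}_{\tau})>0$ throughout (its kernel remains the span of the all-ones vector for every finite $s,\lambda>0$, since $L$ is a summand), so the continuity argument applies cleanly along the whole parameter path. The only step needing more than one line is the identification $\sum_{d=1}^{d_{max}}L_{d}=L(K_{N})$, which is the combinatorial remark above and is the crux of the $s,\lambda\to 0$ limit.
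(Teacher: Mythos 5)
Your proposal is correct and follows essentially the same route as the paper, which justifies Lemma \ref{lemma2} only by the preceding remarks that the weighted graph tends to $G$ as $s,\lambda\rightarrow\infty$ and to $K_{N}$ as $s,\lambda\rightarrow 0$, together with the known spectrum of $L(K_{N})$. You supply the details the paper leaves implicit --- the entrywise limits of $\tilde{L}_{\tau}$, continuity of eigenvalues of symmetric matrices, and the combinatorial identity $\sum_{d=1}^{d_{max}}L_{d}=L(K_{N})$ for a connected graph --- and all of these check out.
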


In addition, the following Lemma characterizes the behavior of the smallest non-zero eigenvalue and of the largest eigenvalue of the transformed $d$-path Laplacian with respect to the transform parameter ($s$ for the Mellin transform or $\lambda$ for the Laplace transform).

\begin{lemma}
Let $G=\left(V,E\right)$ be a simple graph and let $\tilde{L}_{\tau}\left(G,w\right)$
be the transformed $d$-path Laplacian of $G$ with parameter $w_{ij}$
where $w_{ij}=d_{ij}^{-s}$ ($\tau=$Mellin) and $w_{ij}=e^{-\lambda d_{ij}}$
for $d_{ij}>1$ (\textup{$\tau=$}Laplace). Let $\lambda_{n}\left(G,w\right)$
be the largest eigenvalue of $\tilde{L}_{\tau}\left(G,w\right)$.
Then, if $s'<s$ (Mellin) or $\lambda'<\lambda$ (Laplace) we have
that $\lambda_{2}\left(G,w'\right) \geq \lambda_{2}\left(G,w\right)$ and
$\lambda_{n}\left(G,w'\right)\geq \lambda_{n}\left(G,w\right)$. That
is, \textup{$\lambda_{2}\left(G,w\right)$} and \textup{$\lambda_{n}\left(G,w\right)$}
are non-increasing with $s$ or $\lambda.$
\end{lemma}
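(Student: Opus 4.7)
The plan is to reduce the lemma to a positivity statement about the difference of the two transformed Laplacians, and then apply the variational (min--max) characterization of eigenvalues.

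First, I would write out the difference explicitly. In both the Mellin and Laplace cases the transformed operator has the form $\tilde{L}_{\tau}(G,w)=\sum_{d=1}^{d_{max}} w(d)\,L_d$, with $w(1)=1$ fixed, $w(d)=d^{-s}$ or $w(d)=e^{-\lambda d}$ for $d\geq 2$. Hence
\begin{equation}
\tilde{L}_{\tau}(G,w')-\tilde{L}_{\tau}(G,w)=\sum_{d=2}^{d_{max}}\bigl(w'(d)-w(d)\bigr)\,L_d.
\end{equation}
Since $d^{-s}$ is strictly decreasing in $s$ and $e^{-\lambda d}$ is strictly decreasing in $\lambda$ for each fixed $d\geq 2$, the hypothesis $s'<s$ (resp.\ $\lambda'<\lambda$) gives $w'(d)-w(d)\geq 0$ for every $d\geq 2$. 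By the theorem of Estrada recalled in the paper, every $L_d$ is positive semidefinite. Therefore the right-hand side is a non-negative combination of PSD matrices and is itself PSD, i.e.\ $\tilde{L}_{\tau}(G,w')\succeq \tilde{L}_{\tau}(G,w)$ in the Loewner order.

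Next I would invoke the Courant--Fischer min--max principle to transfer this operator inequality to the individual eigenvalues. For the largest eigenvalue this is immediate:
\begin{equation}
\lambda_n(G,w')=\max_{\|x\|=1} x^{T}\tilde{L}_{\tau}(G,w')x\;\geq\;\max_{\|x\|=1} x^{T}\tilde{L}_{\tau}(G,w)x=\lambda_n(G,w),
\end{equation}
because the pointwise inequality $x^{T}\tilde{L}_{\tau}(G,w')x\geq x^{T}\tilde{L}_{\tau}(G,w)x$ holds for every $x$. For $\lambda_2$ I would observe that since $L_d\mathbf{1}=0$ for all $d$, the vector $\mathbf{1}$ lies in the kernel of $\tilde{L}_{\tau}(G,w)$ for both $w$ and $w'$; and because the coefficient of $L_1=L$ is $1>0$ and $G$ is connected, the kernel is exactly the line spanned by $\mathbf{1}$ in both cases. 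Therefore
\begin{equation}
\lambda_2(G,w)=\min_{\substack{\|x\|=1\\ x\perp\mathbf{1}}} x^{T}\tilde{L}_{\tau}(G,w)\,x,
\end{equation}
and the same pointwise inequality on the common subspace $\mathbf{1}^{\perp}$ yields $\lambda_2(G,w')\geq \lambda_2(G,w)$.

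The only subtlety, and the step I would handle most carefully, is the $\lambda_2$ bound: a generic PSD perturbation can change the multiplicity of the zero eigenvalue, in which case the ``second smallest'' index would shift and the monotonicity argument would not be automatic. Here this does not happen precisely because the $L$ contribution (with its fixed unit coefficient) already has $\mathbf{1}$ as its unique kernel direction, so both transformed Laplacians share the same one-dimensional kernel and the min--max reduction on $\mathbf{1}^{\perp}$ is legitimate. Everything else is routine.
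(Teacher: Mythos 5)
Your proof is correct and follows essentially the same route as the paper's: both establish that the quadratic form $x^{T}\tilde{L}_{\tau}x$ is pointwise non-decreasing as $s$ or $\lambda$ decreases (the paper by writing it as $\sum_{i,j}w_{ij}\left(x_{i}-x_{j}\right)^{2}$, you via the Loewner ordering obtained from the positive semidefiniteness of each $L_{d}$) and then transfer this to $\lambda_{2}$ and $\lambda_{n}$ through the Rayleigh--Ritz/Courant--Fischer characterization. Your explicit check that both transformed Laplacians share the kernel $\mathrm{span}\left(\mathbf{1}\right)$, so that minimizing over $\mathbf{1}^{\perp}$ indexes the same eigenvalue in both cases, is a point the paper leaves implicit.
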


\begin{proof}
Let $\vec{x}$ be a nontrivial vector. Using the Rayleigh-Ritz theorem
we have that

$$\lambda_{2}\left(G,w\right)=\underset{\vec{x}\neq0,\vec{x}\bot\vec{1}}\min\dfrac{\vec{x}^{T}\tilde{L}_{\tau}\left(G,w\right)x}{\vec{x}^{T}x}$$
and $$\lambda_{n}\left(G,w\right)=\underset{\vec{x}\neq0}\max\dfrac{\vec{x}^{T}\tilde{L}_{\tau}\left(G,w\right)x}{\vec{x}^{T}x}.$$

Let us select a normalized vector $\vec{x}$ for the sake of simplicity.
Then,
\begin{equation}
\lambda_{2}\left(G,w\right)=\underset{\vec{x}\neq0,\vec{x}\bot\vec{1}}{\min}\sum_{i,j}w_{ij}\left(x_{i}-x_{j}\right)^{2},
\end{equation}

\begin{equation}
\lambda_{n}\left(G,w\right)=\underset{\vec{x}\neq0}{\max}\sum_{i,j}w_{ij}\left(x_{i}-x_{j}\right)^{2}.
\end{equation}
Then, for a fixed vector $\vec{x}$, a decrease of the parameter $w_{ij},$
i.e., increase of the parameters $s$ or $\lambda$, makes that $\lambda_{2}\left(G,s\right)$
and $\lambda_{n}\left(G,s\right)$ decay or remain at their previous value, so in general they are non-increasing with $s$ or $\lambda$.
\end{proof}

Based on these considerations and on the fact that $\lambda_2\geq \lambda_n$, one could expect that $Q_\tau$ is also increasing on average when $s$ or $\lambda$ are decreasing. Numerical results, which are shown in the next section, along with the fitting functions, show that, in addition, the decay is monotonic. This implies that for any network
of coupled oscillators, independently of its topology and the unit
dynamics, the increase of the long-range coupling between oscillators,
i.e., $s,\lambda\rightarrow0$, produces the best possible synchronizability.

We illustrate this result by calculating $\lambda_{2,\tau}/N$ and $Q_{\tau}$
for Erd\H{o}s-Renyi (ER) random graphs with $N=100$ and $\langle k\rangle=4$
for increasing values of the parameters $s$ and $\lambda$ of the
$d$-path Laplacian matrices. Fig.~\ref{fig:syncbltER} clearly shows
that for $s,\lambda\rightarrow0$ the best possible synchronizability
is obtained, whereas for $s,\lambda\rightarrow\infty$ the synchronizability
of the original graph is recovered.

\begin{figure}
\centering{}
\subfigure[]{\includegraphics[width=0.4\textwidth]{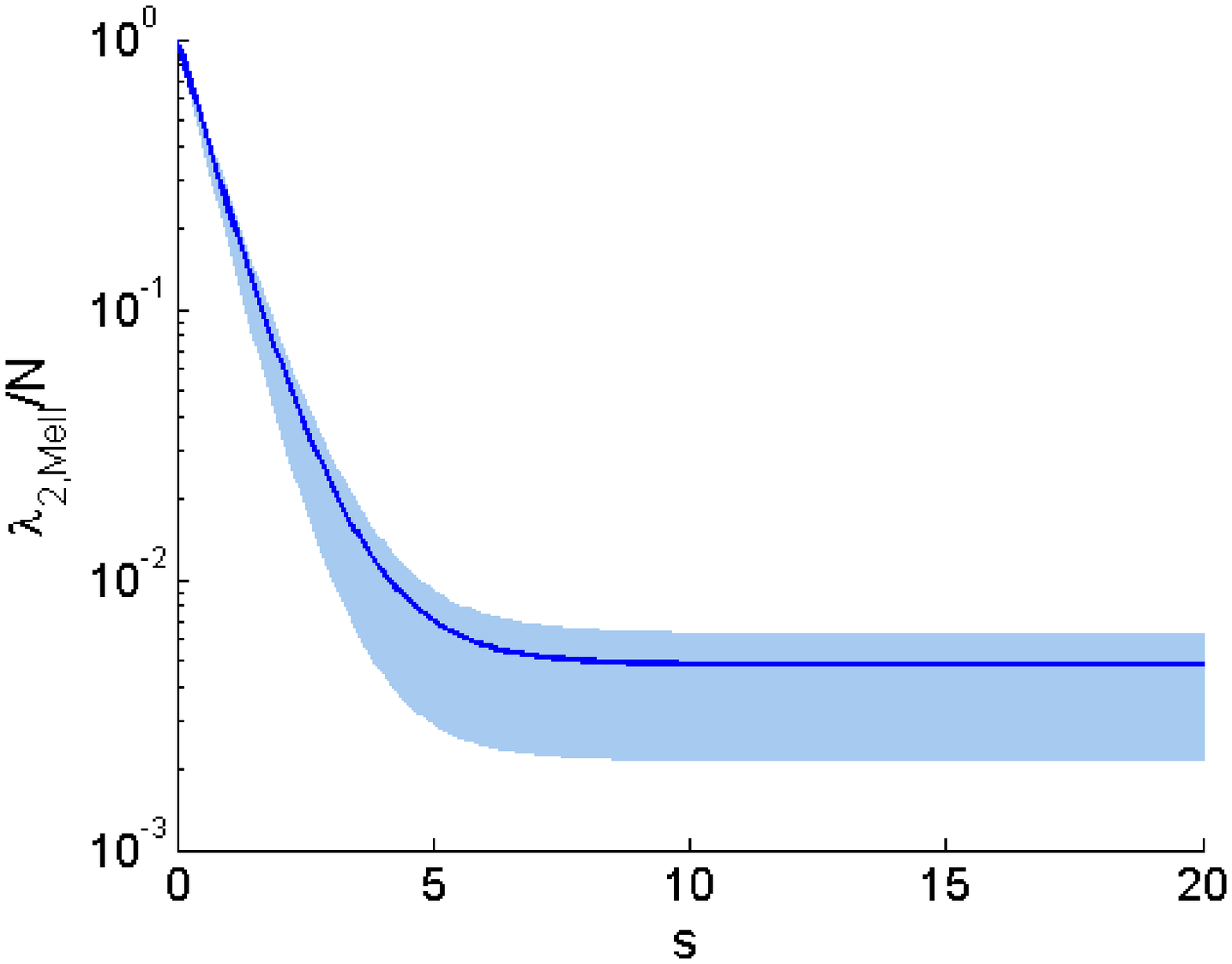}}
\subfigure[]{\includegraphics[width=0.4\textwidth]{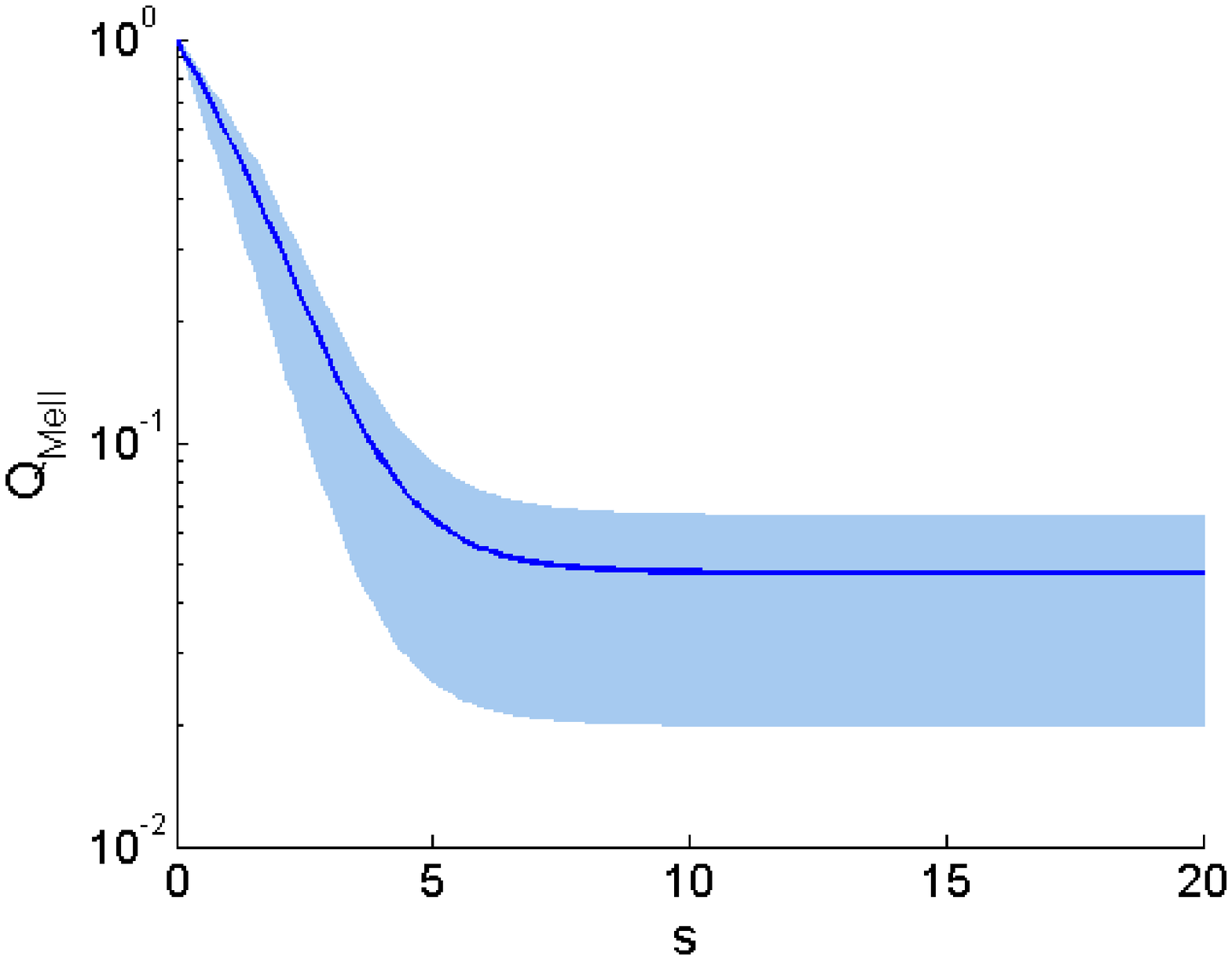}}\\
\subfigure[]{\includegraphics[width=0.4\textwidth]{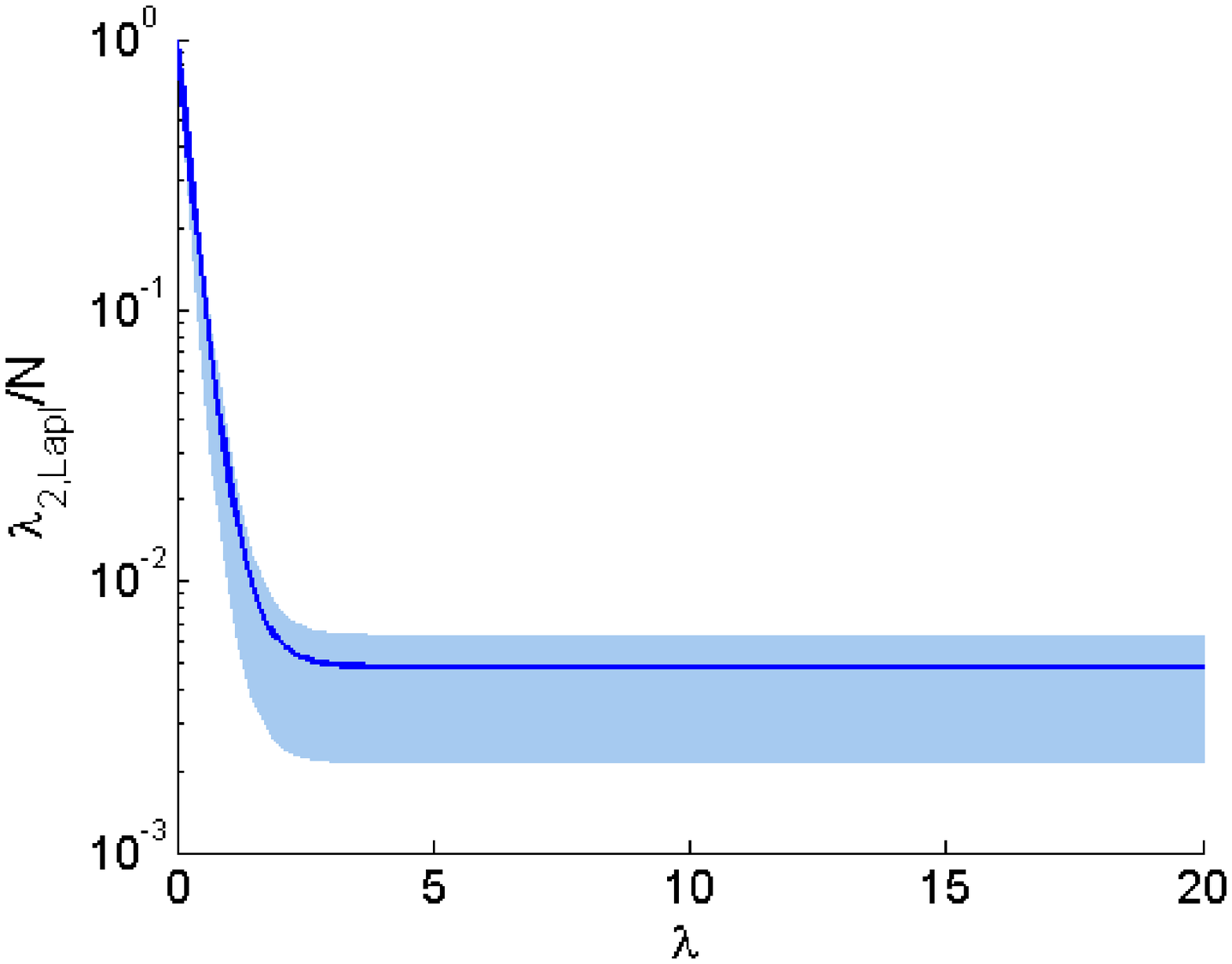}}
\subfigure[]{\includegraphics[width=0.4\textwidth]{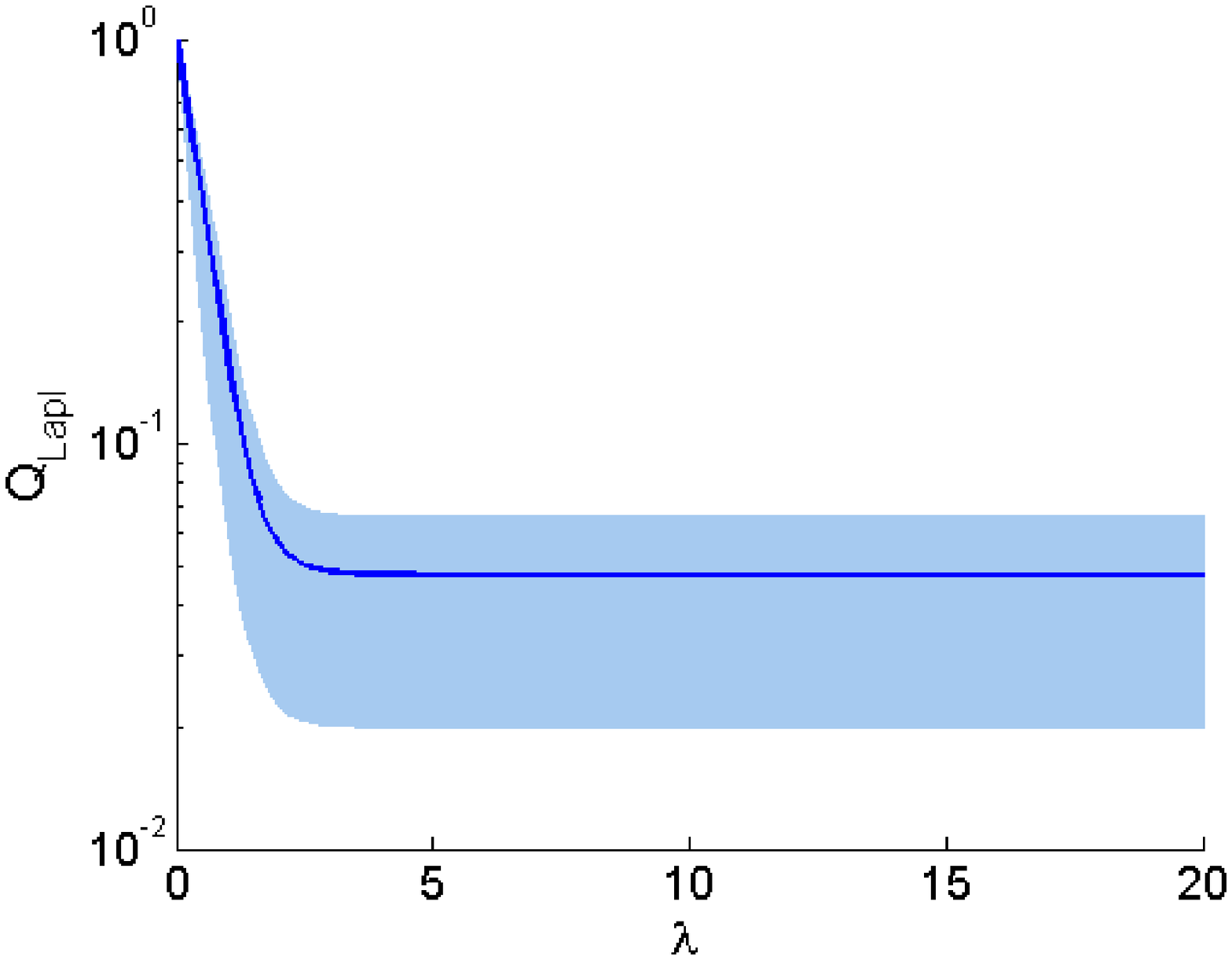}}
\caption{\label{fig:syncbltER} Synchronizability of LRI ER networks. The $d$-path
Laplacian matrices are Mellin transformed in (a)-(b) and Laplace transformed
in (c)-(d). Panels (a) and (c) illustrates $\lambda_{2}/N$, whereas
panels (b) and (d) illustrate $\lambda_{2}/\lambda_{N}$. Results
are averaged on 50 networks with $N=100$ and $\langle k\rangle=4$. The solid lines represent mean value of the measures for synchronizability, while the shadow regions indicate the range of variability (minimum and maximum values).}
\end{figure}

A note of caution should be written here. The fact that for $s,\lambda\rightarrow0$ the network behaves like a complete graph does not mean that the current results are trivial in the sense that we replace the graph under study by a complete graph. If we consider values of $s,\lambda$ close but not exactly equal to zero, the synchronizability of the networks approach the best possible value, but (and this is an important but) it still depends on the topology of the network. In order to illustrate this
fact we investigate next how the values of $\lambda_{2,\tau}/N$ and $Q_{\tau}$ change with $s$ or $\lambda$ for a few different types of graphs. In particular, we have done the calculations for a series of graphs: the star topology, the triangular and square lattices, the ring, the Barbell graph and the path graph. For each of these graphs we have applied the Mellin and the Laplace transform and computed the synchronizability measures for different values of $s$ or $\lambda$. The results, shown in Fig. \ref{fig:regulargraphs}, include as reference the synchronizability of the complete graph, which clearly does not depend on the transform parameters. The curves clearly show that the synchronizability tends to one for all the graphs as LRIs are increasingly weighted ($s,\lambda \rightarrow 0$), but the impact of LRIs is a function of the graph topology.

\begin{figure}
\centering{}
\subfigure[]{\includegraphics[width=0.4\textwidth]{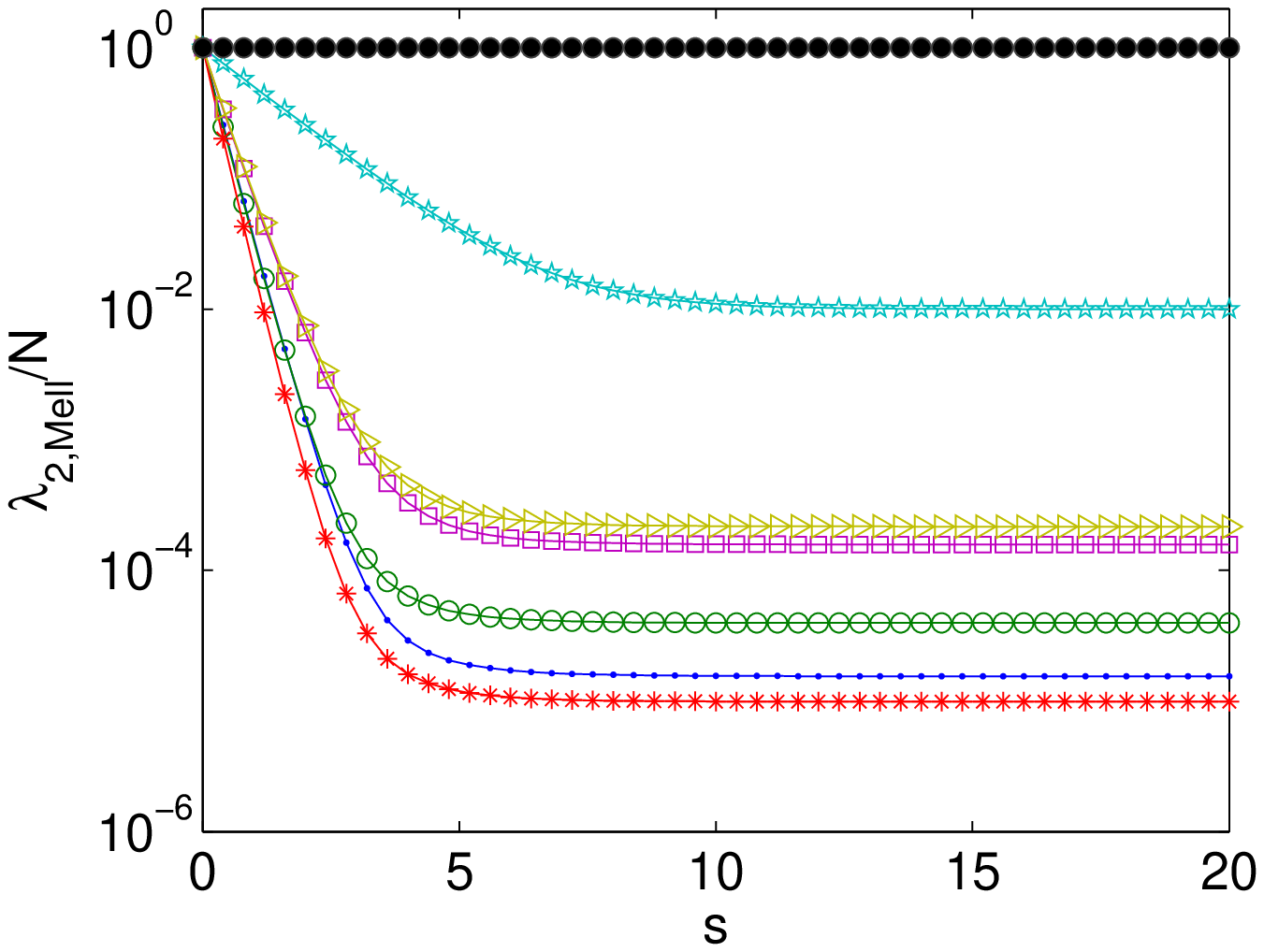}}
\subfigure[]{\includegraphics[width=0.4\textwidth]{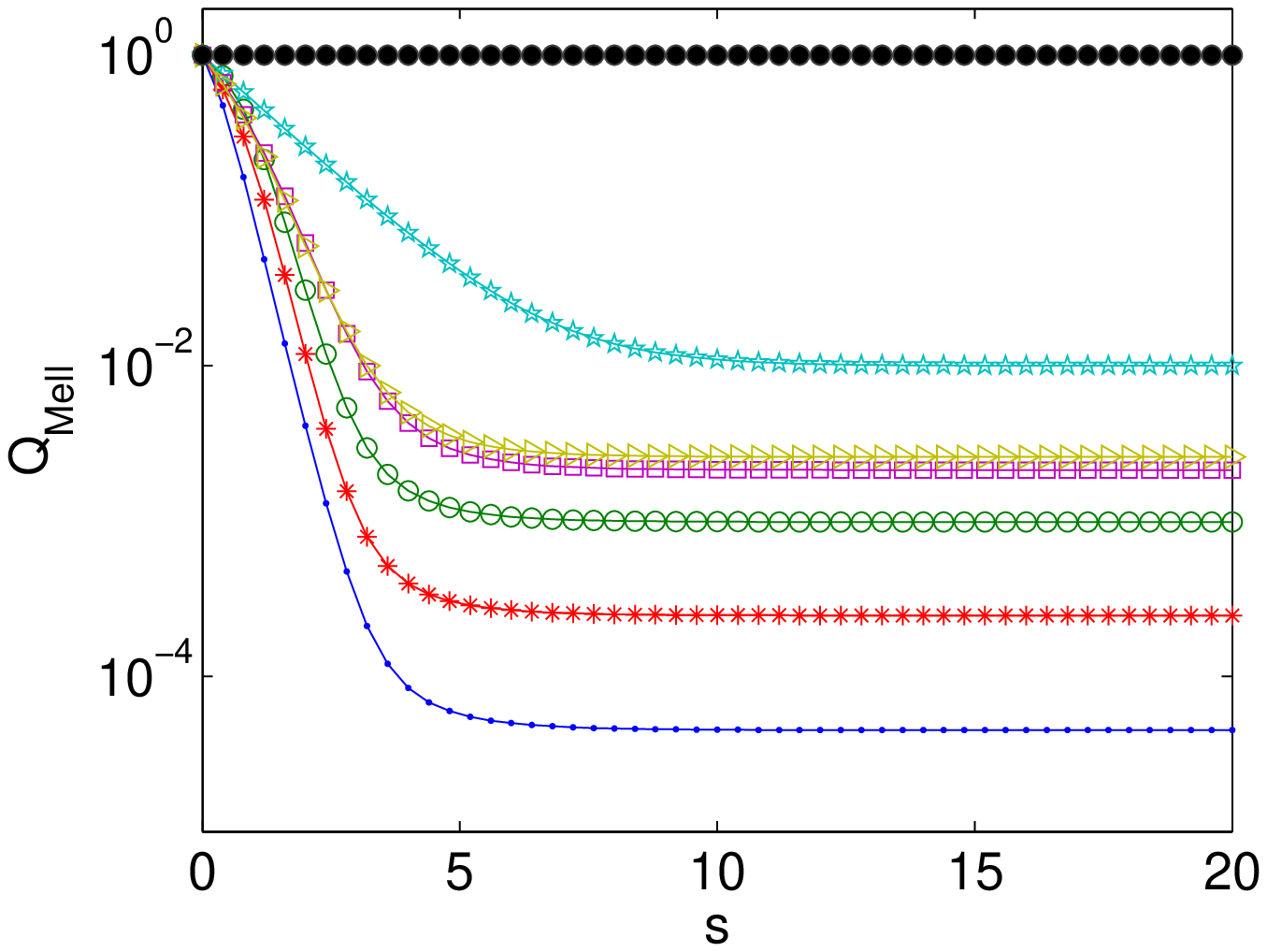}}\\
\subfigure[]{\includegraphics[width=0.4\textwidth]{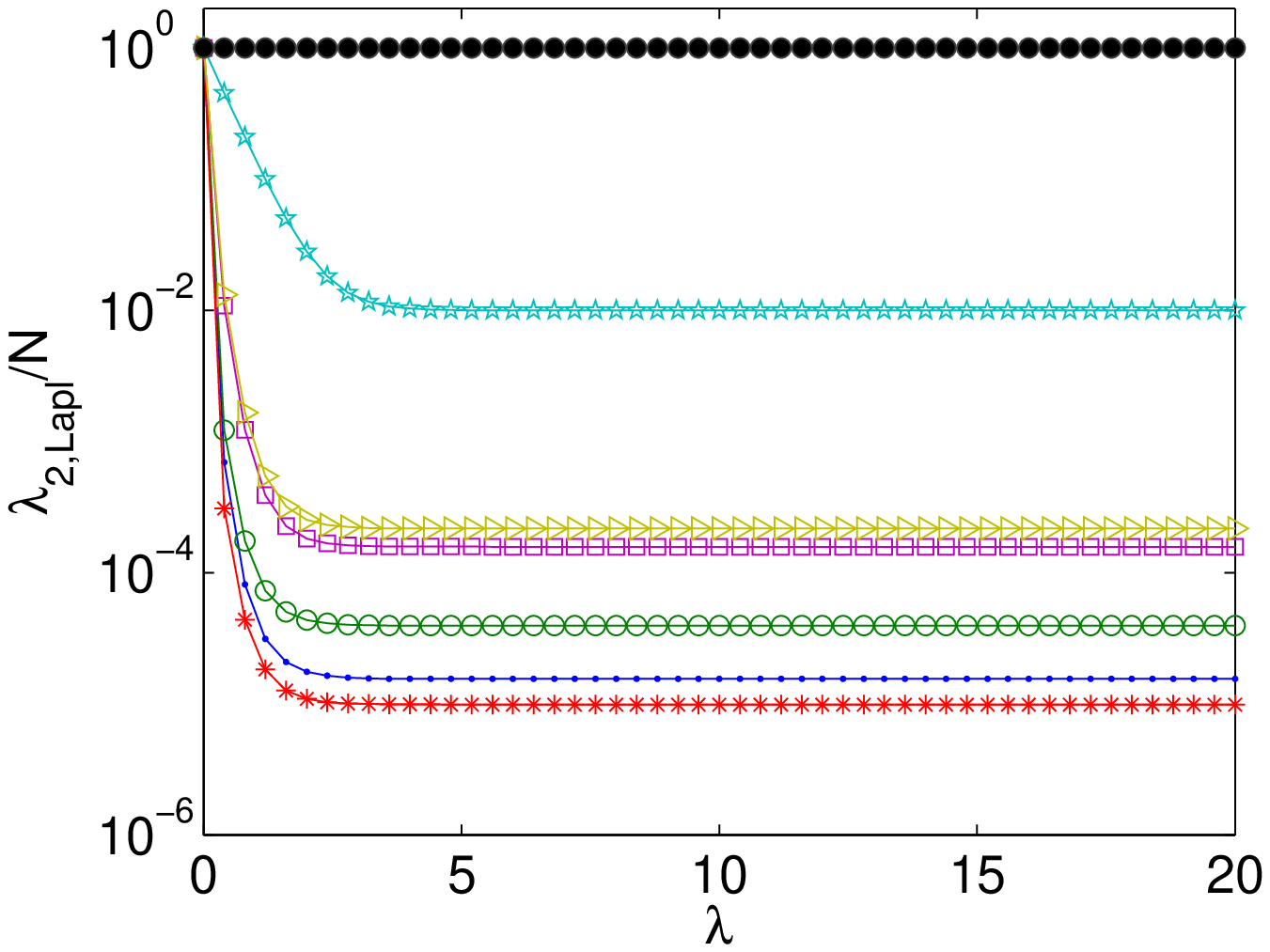}}
\subfigure[]{\includegraphics[width=0.4\textwidth]{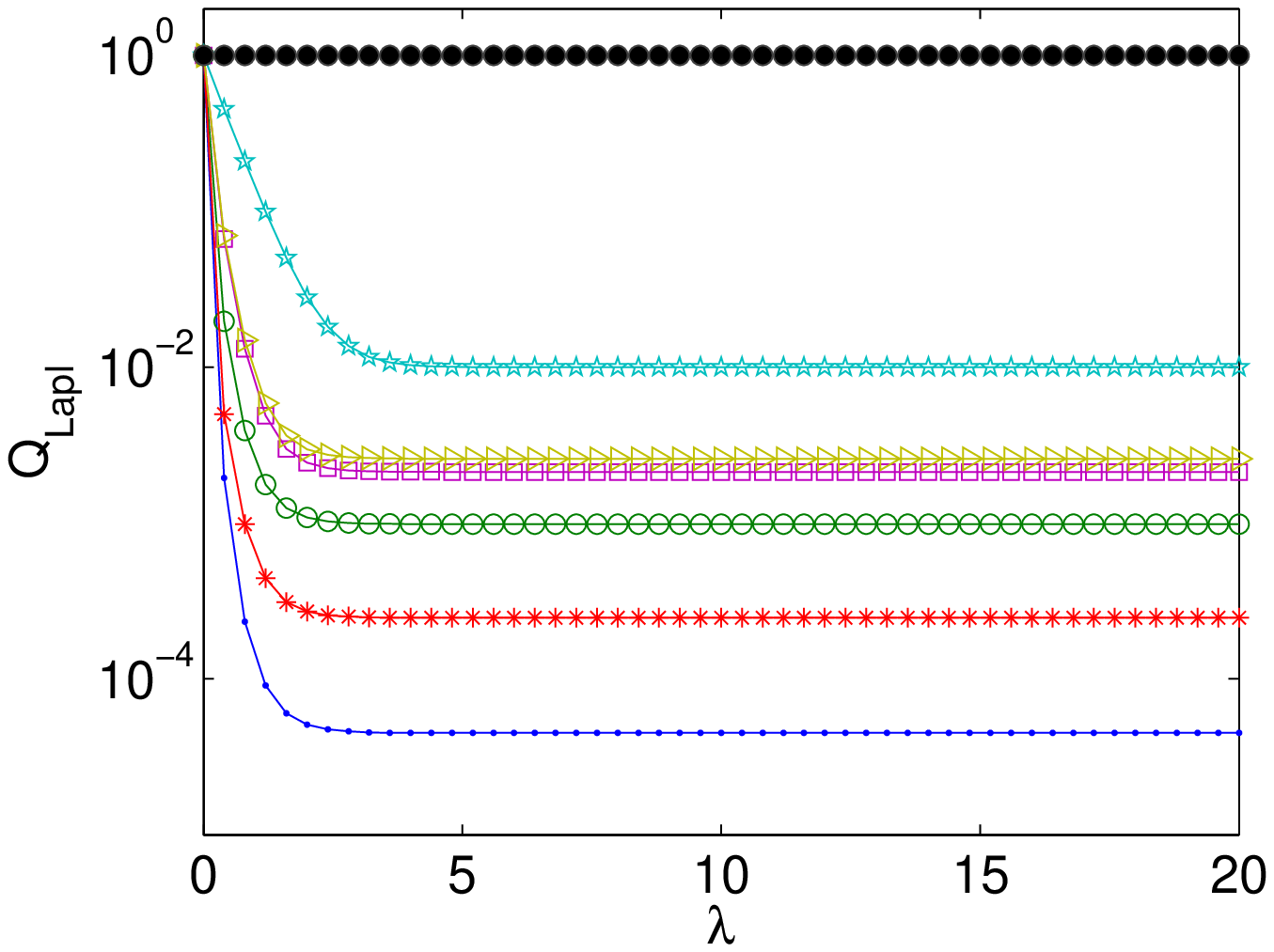}}
\caption{\label{fig:regulargraphs} Synchronizability of a few graphs with LRIs. The $d$-path
Laplacian matrices are Mellin transformed in (a)-(b) and Laplace transformed
in (c)-(d). Panels (a) and (c) illustrates $\lambda_{2,\tau}/N$, whereas
panels (b) and (d) illustrate $Q_{\tau}$. The different curves correspond to these graphs: complete graph (black, filled circles); star topology (cyan, stars); triangular lattice (yellow, triangles); square lattice (magenta, squares); ring (green, open circles); Barbell graph (blue, dots); path graph (red, asterisks).}
\end{figure}

To further corroborate our finding, we considered a dataset of 54
real-world networks (see Appendix for descriptions of the networks)
and calculated $\lambda_{2,\tau}/N$ and $Q_{\tau}$ without and with LRIs. Fig.~\ref{real_world_global}(a) shows the results of the change of $Q_{\tau}$ \textemdash the results
for $\lambda_{2,\tau}/N$ are very similar\textemdash with the parameter
$s$ of the Mellin transform. This confirms our previous result
that the rate of change of the spectral ratio with the parameters
of the transforms used in the synchronization models depends on the
topology of the network. Just to give a clear example of the meaning
of this finding, when $s=1$ there are networks (such as SoftwareMysql, network 17 of Table~\ref{tab:tab1} in Appendix) for which the relative value of $Q_{\tau}$ is approximately
30\% of the best possible value while for others (such as Skipwith, network 29 of Table~\ref{tab:tab1} in Appendix) it approaches 60\%. This demonstrates that the topology of the network
continues to play a fundamental role in the synchronizability of the
system even when the long-range influence of the nodes is very high.
Knowing how the network topology influences the rate of change of
the spectral parameters and consequently of their synchronizability
is an important open question that should be further investigated. Some insights on this is provided by the following considerations on how to fit the data of Fig.~\ref{real_world_global}(a). To illustrate this, let us first consider the star topology. For this network we have been able to find a closed formula for the eigenvalues of the $d$-path transformed Laplacian matrix. Let $\lambda_i(\mathcal{L})$, $\lambda_i(\mathcal{L}_d)$, and $\lambda_i(\mathcal{K}_n)$ indicate the $i$-th eigenvalue of the original Laplacian matrix, of the $d$-path transformed Laplacian matrix (here, for simplicity we consider only the Mellin transform, but similar results hold for the Laplace one) and of the Laplacian matrix of the complete graph. For the star topology, we have:

\begin{equation}
\lambda_i(\mathcal{L}_d)=\lambda_i(\mathcal{K}_n)p^{-s}+\lambda_i(\mathcal{L})(1-p^{-s})
\end{equation}

\noindent with $p=2$ (this expression has been analytically checked up to $n=6$ and numerically verified for larger $n$).

We have then generalized this formula for an arbitrary network as follows:

\begin{equation}
\lambda_i(\mathcal{L}_d)=\lambda_i(\mathcal{K}_n)p_i^{-s}+\lambda_i(\mathcal{L})(1-p_i^{-s})
\end{equation}

\noindent where now $p_i$ is a fitting parameter (one for each eigenvalue to fit) that depends on the topology we are investigating. Note that for $s=0$ we have $\lambda_i(\mathcal{L}_d)=\lambda_i(\mathcal{K}_n)=n$ and for $s\rightarrow \infty$ we have $\lambda_i(\mathcal{L}_d)=\lambda_i(\mathcal{L})$ (the eigenvalues of the original network).

Finally we propose to fit $Q_{Mell}$ as:

\begin{equation}
\label{eq:fitting}
Q_{Mell}=\frac{\lambda_2(\mathcal{K}_n)p_2^{-s}+\lambda_2(\mathcal{L})(1-p_2^{-s})}{\lambda_n(\mathcal{K}_n)p_n^{-s}+\lambda_n(\mathcal{L})(1-p_n^{-s})}
\end{equation}

An example for three real-world networks is shown in Fig.~\ref{real_world_global}(b). This fitting provides an approximate formula describing how $Q_{Mell}$ changes with $s$. Noticeably, the formula depends on the characteristics of the topology through the eigenvalues $\lambda_2$ and $\lambda_n$ of the original network Laplacian and some fitting parameters $p_2$ and $p_n$.

\begin{figure}
\begin{centering}
\subfigure{\includegraphics[width=0.4\textwidth]{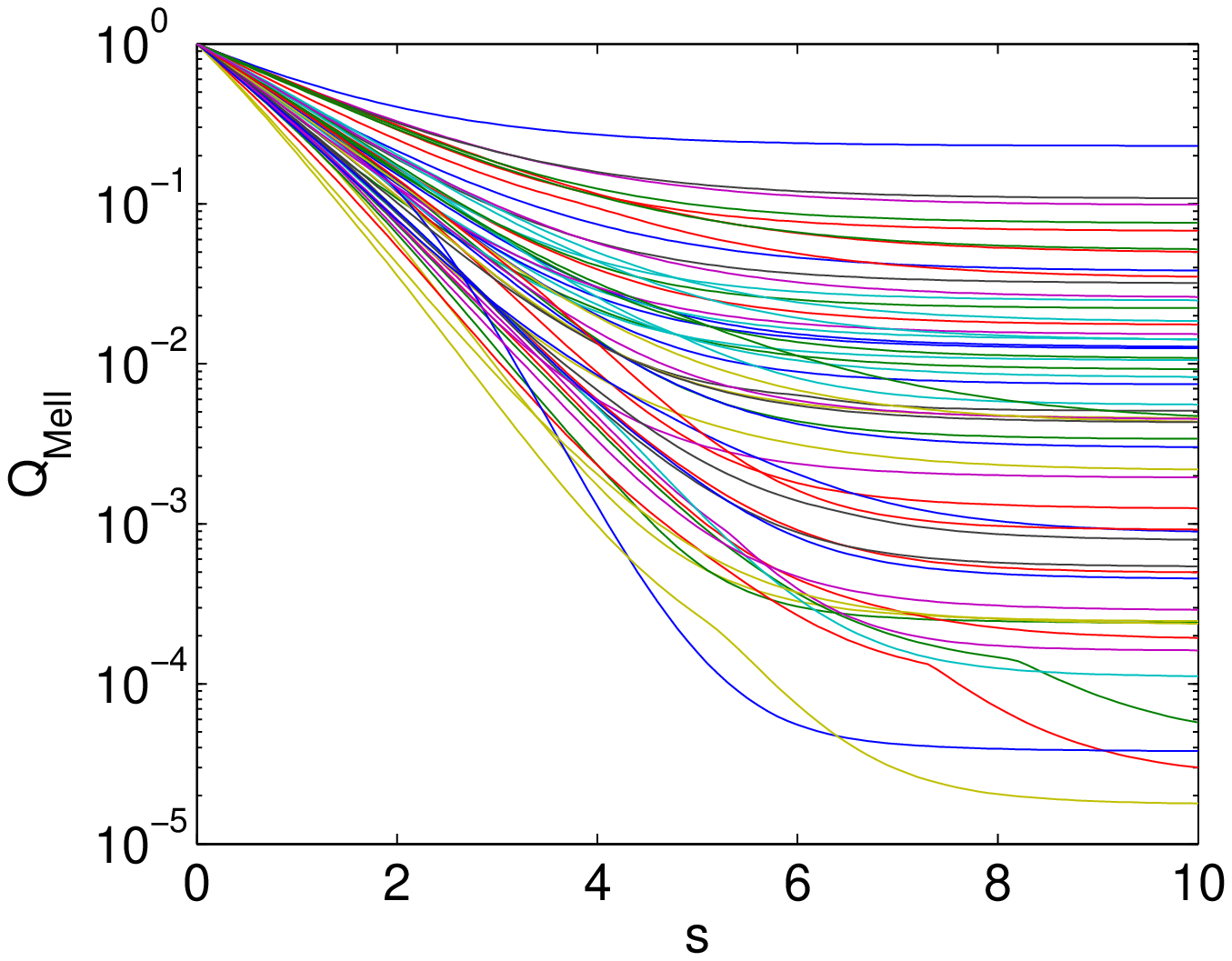}}
\subfigure{\includegraphics[width=0.4\textwidth]{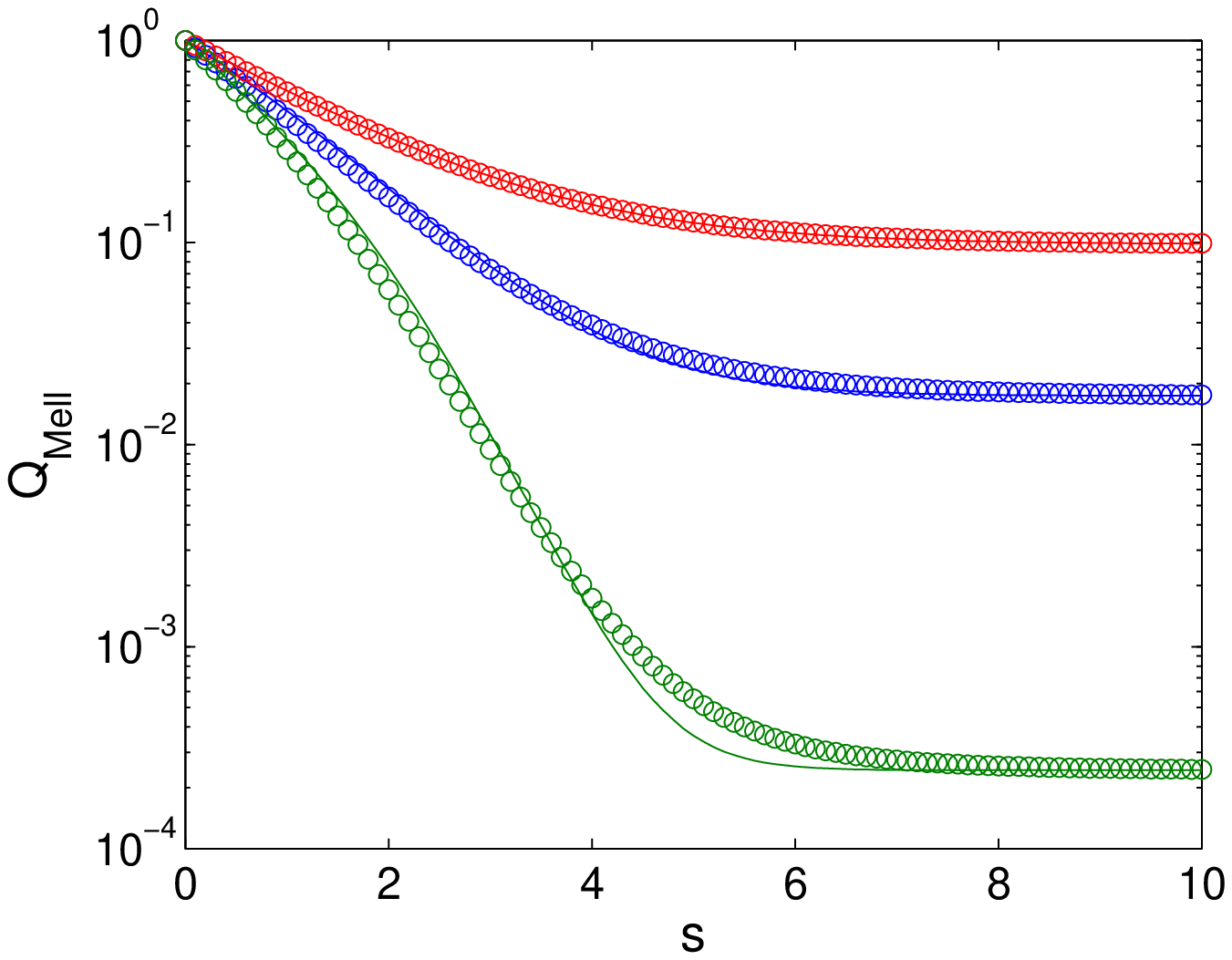}}
\par\end{centering}
\caption{(a) Change of the synchronizability parameter (spectral ratio) for graphs with LRIs as a function of the Mellin transform parameter $s$ for 54 real-world networks studied here. (b) Fitting of the synchronizability parameter (spectral ratio) for three real-world graphs (network 33 in blue, network 49 in red, network 51 in green). Numerical data are reported with symbols, while the continuous line is the fitting through Eq. (\ref{eq:fitting}) with $p_2=3.16$ and $p_n=2.08$ for network 33, $p_2=2.07$ and $p_n=2.00$ for network 49, and $p_2=10.45$ and $p_n=4.05$ for network 51.}
\label{real_world_global}
\end{figure}

In the case of the Kuramoto model we show some further results confirming the beneficial effect of LRIs. For this model, in fact, it is known that $\lambda_{2}$ plays a fundamental role. In particular, for identical oscillators, it is known that the synchronization time scales with the inverse of $\lambda_{2}$. Based on this and on the result of Lemma~\ref{lemma2}, we thus expect that LRIs in the Kuramoto model promote synchronization. For the numerical simulations, we considered two different network topologies, ER and scale-free (SF) random graphs \cite{albert2000error}, and monitored synchronization by measuring the order parameter $r$, defined as

\begin{equation}
r=\left\langle\left|\frac{1}{N}\sum\limits _{i=1}^{N}e^{j\theta_{i}}\right|\right\rangle_{T},
\end{equation}

\noindent where $T$ is a sufficiently large averaging window. Values
of $r$ close to one indicates a high degree of synchronization, while
low values (close to zero) the absence of coherence among the oscillators.
Fig.~\ref{fig:Kuramoto}(a) shows the behavior of the order parameter
$r$ vs. the coupling $\sigma$ for an ER network with $N=100$ and
different values of $s$. As expected, decreasing $s$ favors synchronization.
Similar results are obtained for SF networks (Fig.~\ref{fig:Kuramoto}(b)).
The results are illustrated for Mellin transformed $d$-path Laplacian
matrices, but also hold when the Laplace transform is considered.

\begin{figure}
\centering{}
\subfigure[]{\includegraphics[width=0.4\textwidth]{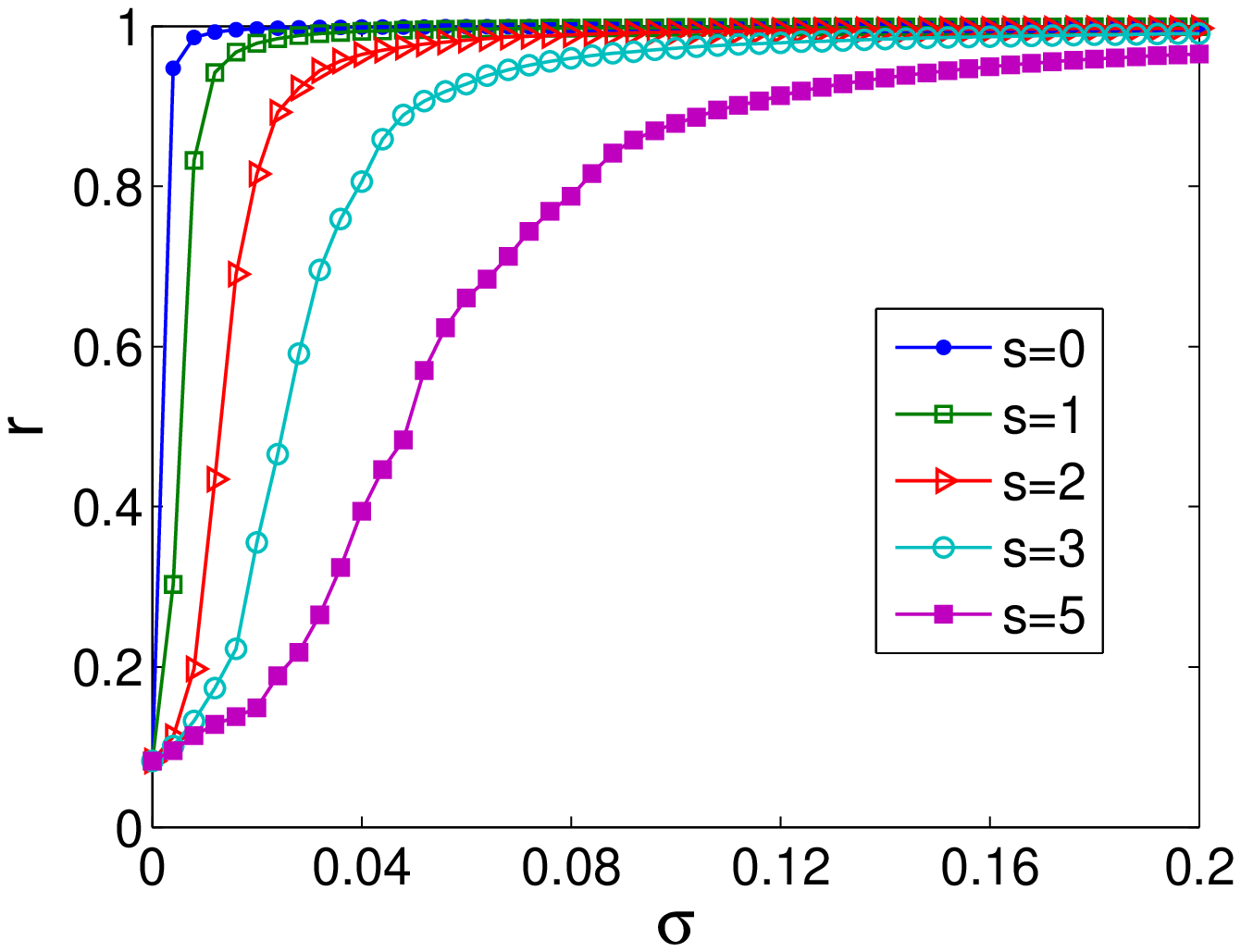}}
\subfigure[]{\includegraphics[width=0.4\textwidth]{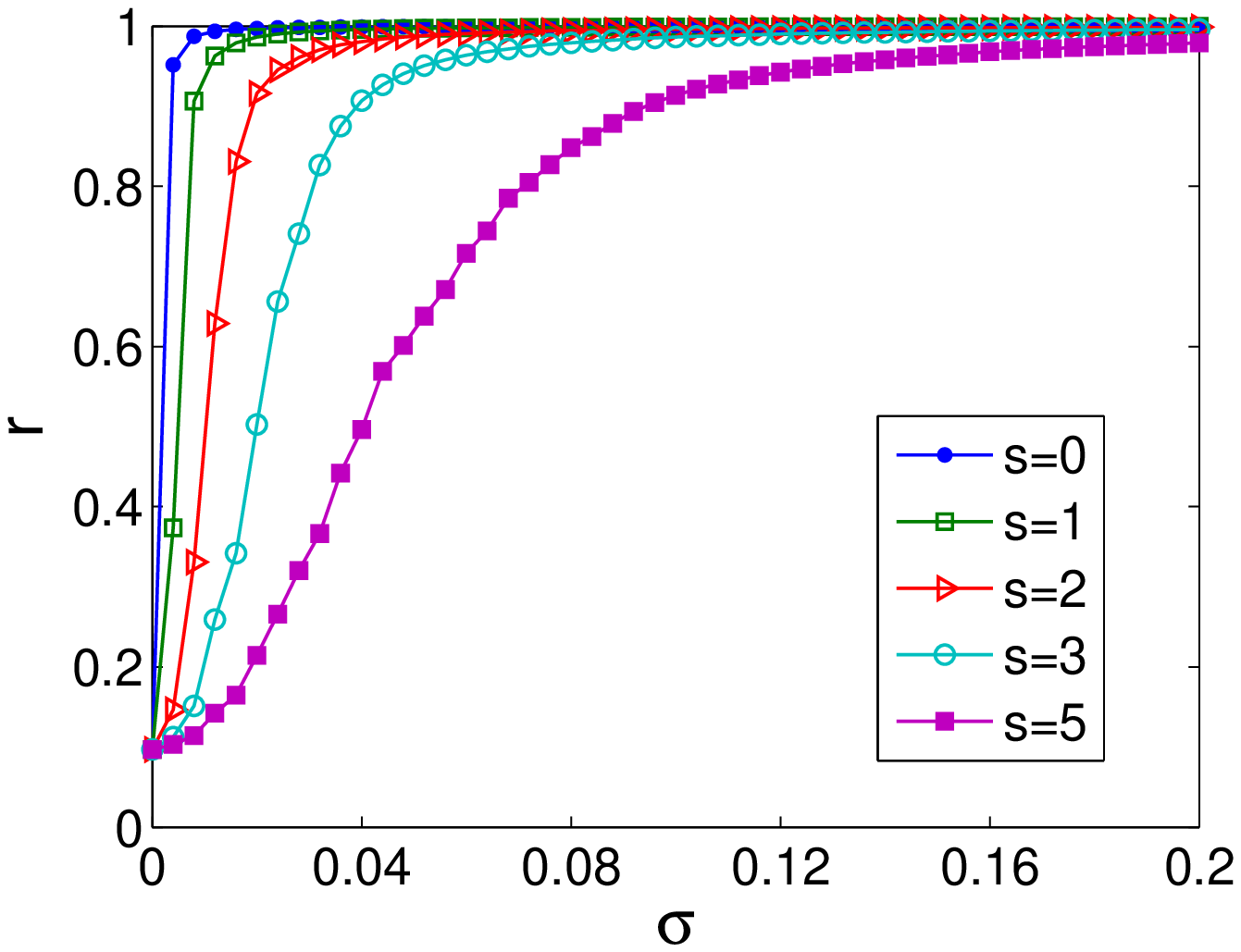}}
\caption{\label{fig:Kuramoto} Synchronization order parameter $r$ vs. the
coupling strength $\sigma$ for an ER (a) and SF (b) network of $N=100$ Kuramoto oscillators with LRIs. Mellin transform is used to weight the LRIs.}
\end{figure}

\section{How does network topology influence the role of LRIs?}

The analysis of the effects of LRIs on the synchronizability of real-world networks has revealed that the quantitative impact of LRIs varies from network to network. In this section we study how the effects of LRIs on synchronizability depend on some network characteristics by considering a series of artificial networks with controlled attributes.

To begin, we start investigating how the effects of LRIs depend on network diameter. To this aim, we have considered the Watts-Strogatz model generating small-world networks through a rewiring process applied to a pristine regular graph \cite{watts1998collective}. More specifically, we started from a ring with 4-nearest neighbors and then rewired the links with a progressively increasing rewiring probability, labeled as $p_{r}$. This produces networks with the same number of nodes and links, but with a different
diameter (large for $p_{r}=0$, small for $p_{r}=1$). Figs.~\ref{fig:alphanorm}(a)
and (d) illustrate the results. LRIs always lead to enhancement of
synchronizability, but the larger is the diameter the larger are the
(beneficial) effects of LRIs.

We have then studied how the effects of LRIs are influenced by the
average degree $\langle k\rangle$. To this aim, we have considered
networks with the same number of nodes and a growing number of links.
The ER model is adopted. Figs.~\ref{fig:alphanorm}(b) and~(e) shows
that the smaller is the average degree the larger are the effects
of LRIs on synchronizability, so that we conclude that LRIs are more
important in networks with smaller degree.

Finally, we have investigated the effect of degree heterogeneity by
simulating networks with the same number of nodes, number of links
and diameter, but different degree distributions. These are obtained
by using the network model described in \cite{gomez2006scale}, which
parameterizes with $\alpha$ the tuning from one network type to the
other ($\alpha=0$ corresponds to a SF network, while $\alpha=1$
to an ER network). Fig.~\ref{fig:alphanorm}(c) and~(f) shows that
the effects on $\lambda_{2,\tau}/N$ poorly depend on the topology, but those
on the ratio $Q_{\tau}$ are much more important on
SF networks rather than in ER structures. So, SF networks receive
more benefits from the inclusion of LRIs.

\begin{figure}
\centering{}
\subfigure[]{\includegraphics[width=0.32\textwidth]{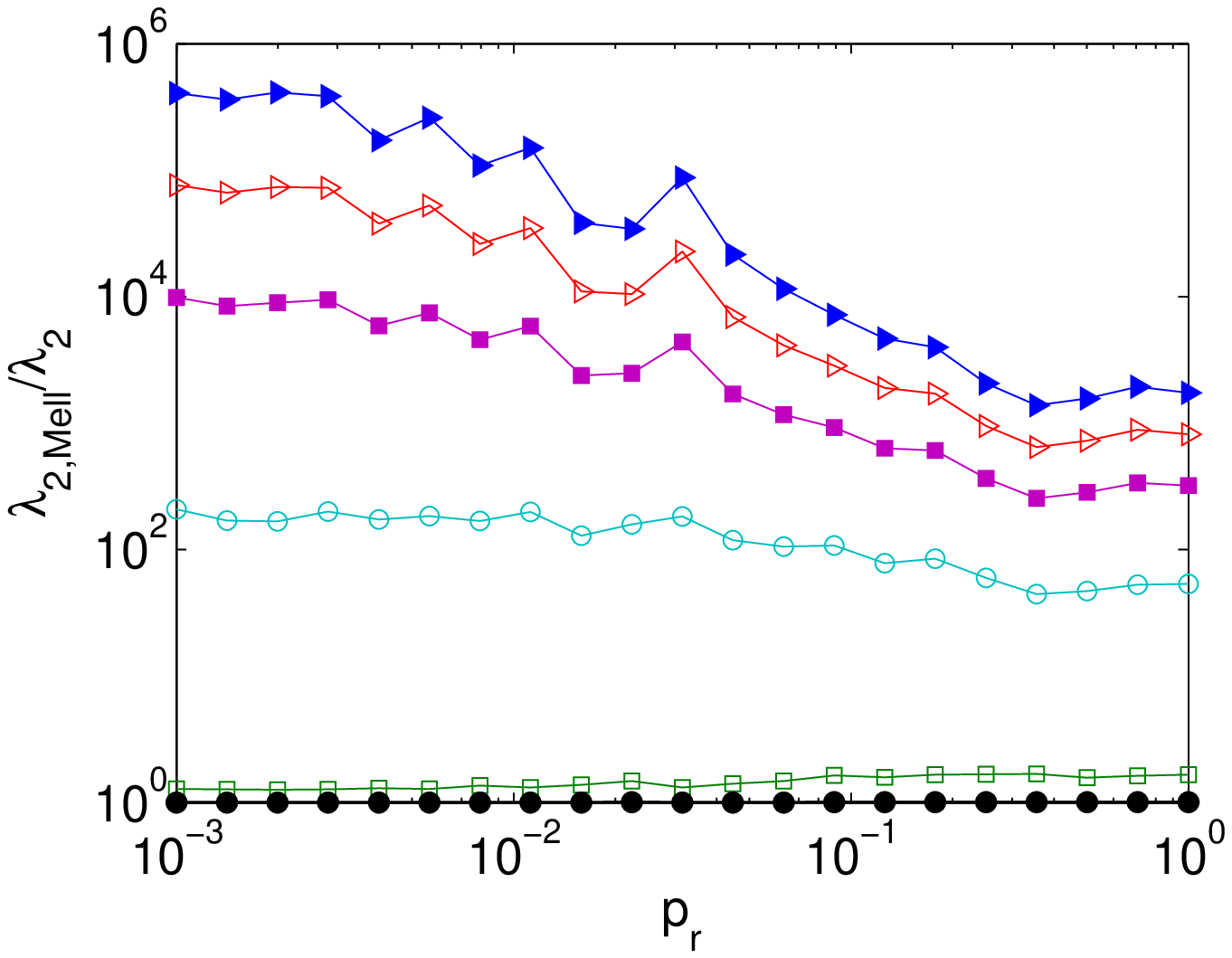}}
\subfigure[]{\includegraphics[width=0.32\textwidth]{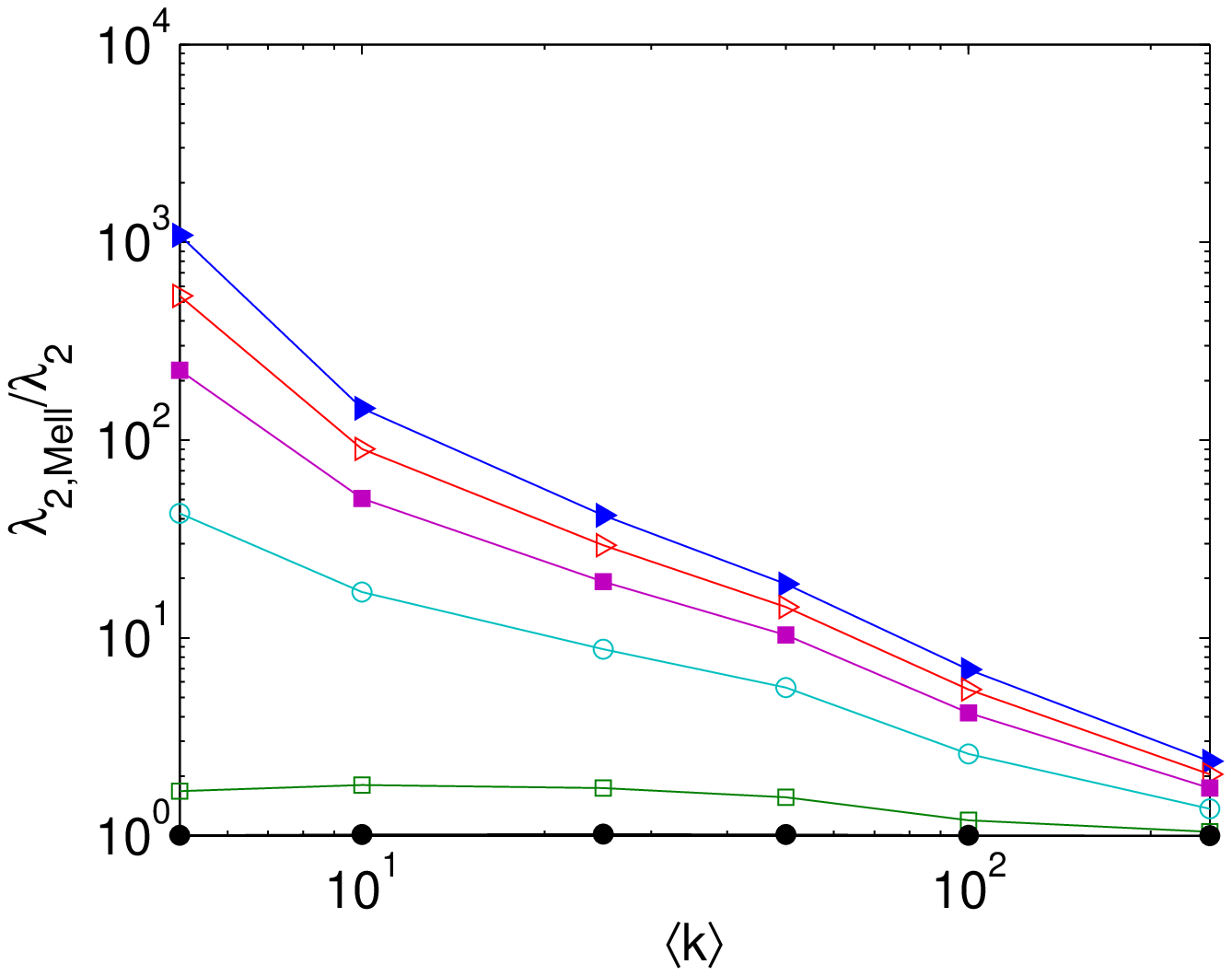}}
\subfigure[]{\includegraphics[width=0.32\textwidth]{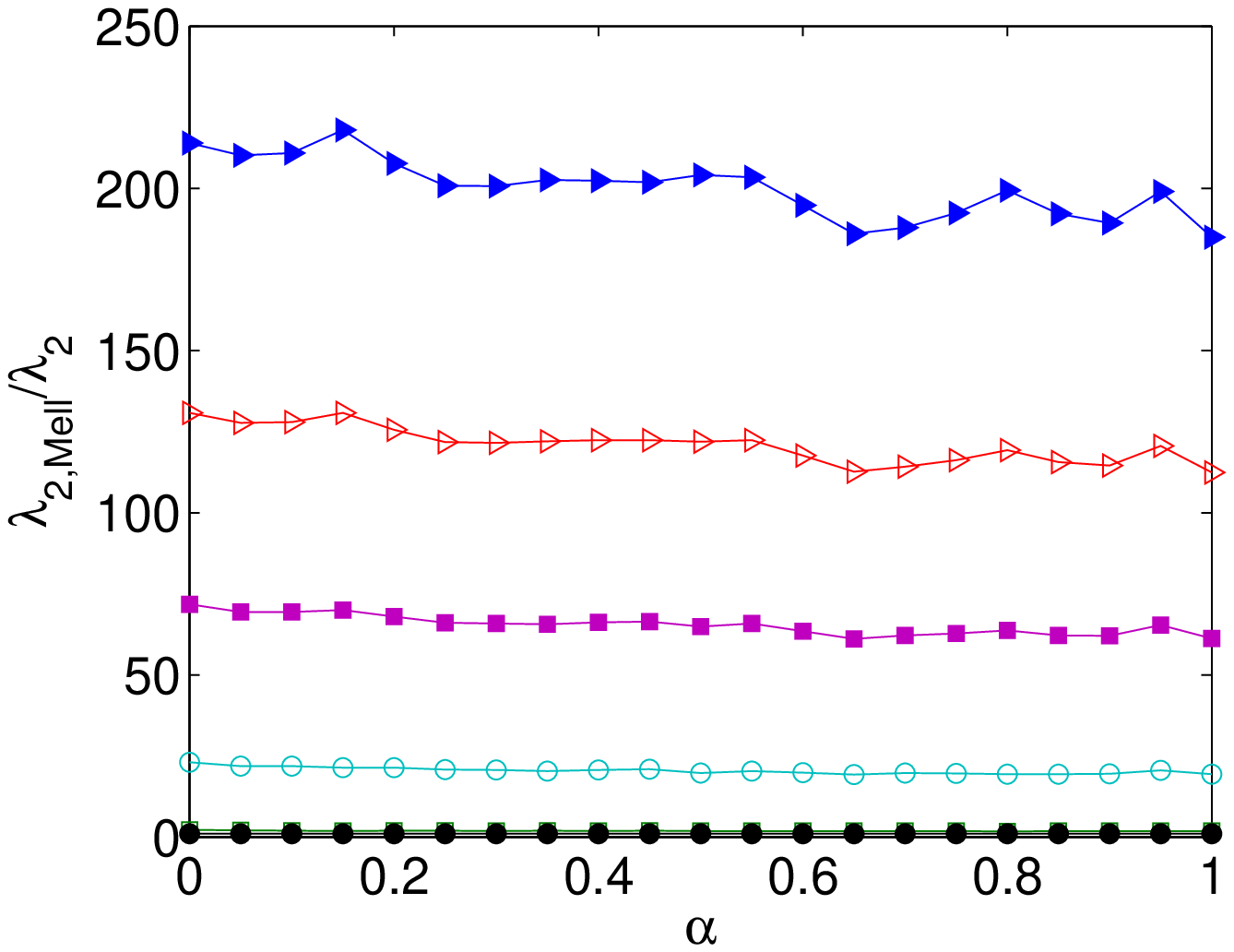}}
\subfigure[]{\includegraphics[width=0.32\textwidth]{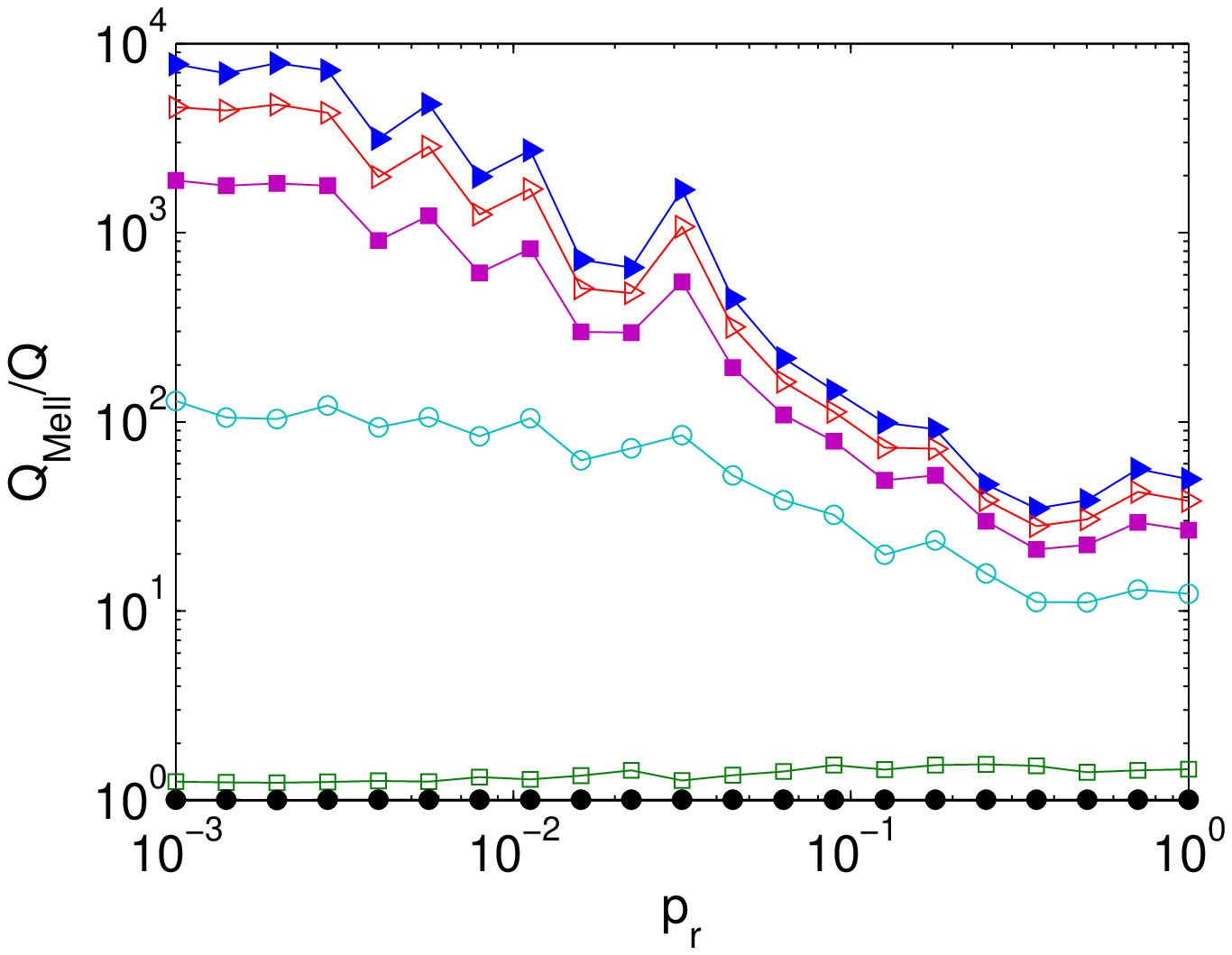}}
\subfigure[]{\includegraphics[width=0.32\textwidth]{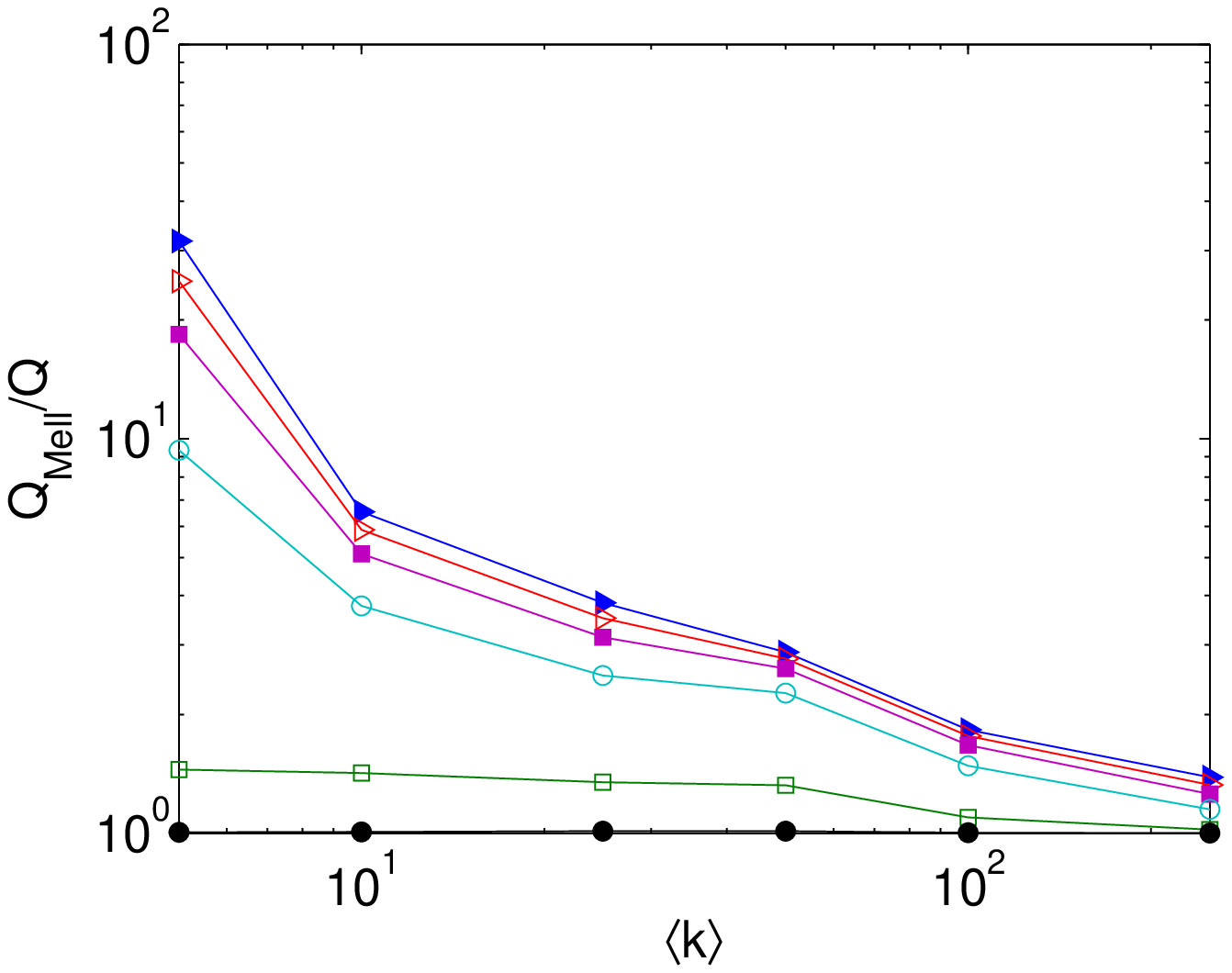}}
\subfigure[]{\includegraphics[width=0.32\textwidth]{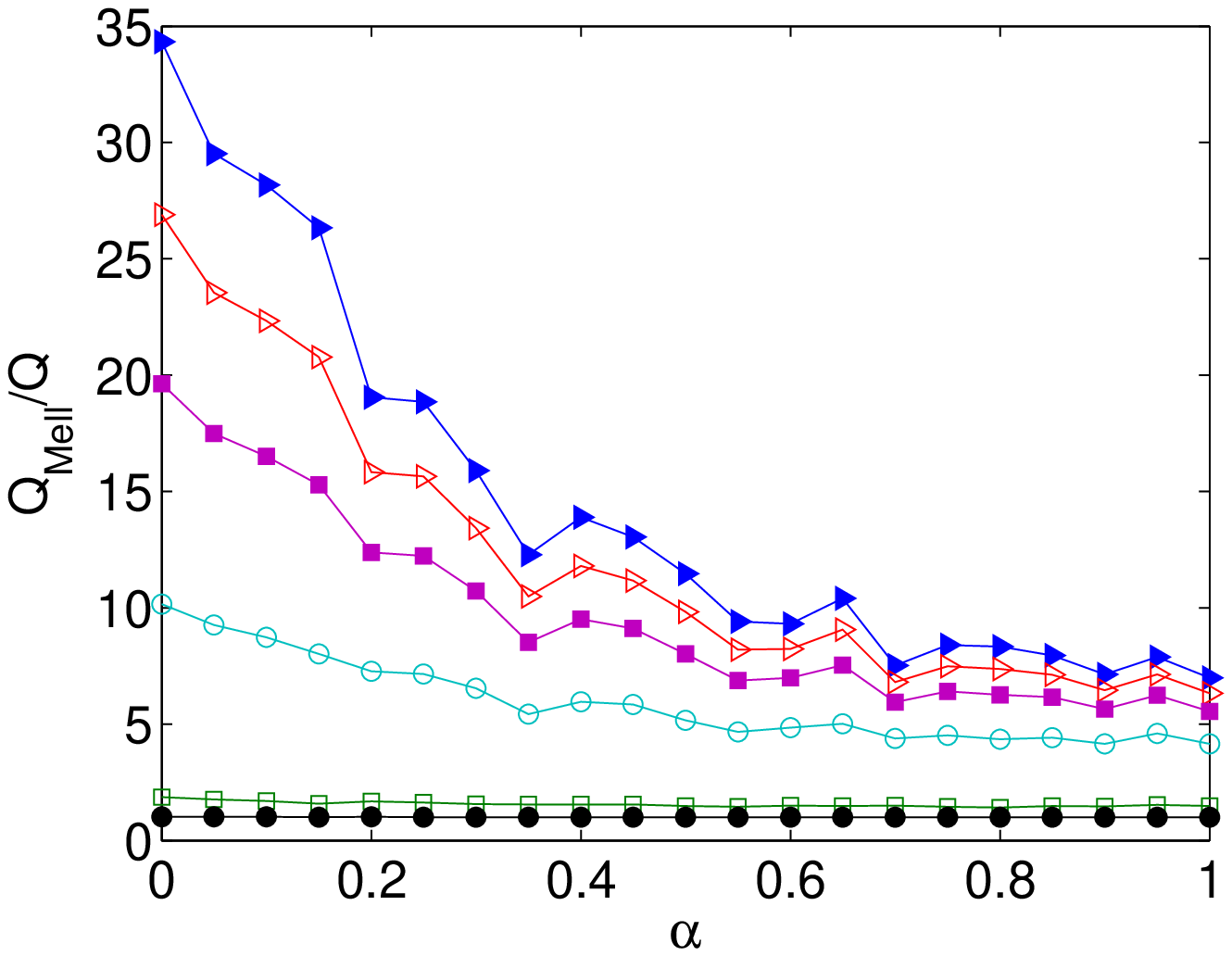}}
\caption{\label{fig:alphanorm} Values of $\lambda_{2,Mell}/\lambda_{2}$ (a)-(c) and of $Q_{Mell}/Q$ (d)-(f) for artificial networks with LRIs. LRIs are weighted with the Mellin transform. Panels (a) and (d) refer to networks with different diameter (see text for the details on the generation of the networks). Panels (b) and (e) refer to artificial networks with different average degree $\langle k\rangle$. Panels (c) and (f) refer to artificial networks with different heterogeneity levels and the same average degree, $\langle k\rangle=8$. Here $\alpha$ tunes the heterogeneity of the network ($\alpha=0$ corresponds to a SF network, while $\alpha=1$ to an ER network). The curves refer to different values of $s$ ($s=0.1$ blue filled triangles, $s=0.5$ red open triangles, $s=1$ magenta filled squares, $s=2$ cyan open circles, $s=5$ green open squares, $s=10$ black filled cyrcles). All networks have $N=500$ nodes.}
\end{figure}

\section{Critical edges for synchronization}

To further study the interplay between synchronizablity and structure, in this section we investigate the effect of the edge removal on synchronization in the absence and in the presence of LRIs. More specifically, edge
removal is performed according to different strategies; we consider either the removal of links chosen at random or according to some criterion ranking the edges. Since synchronization occurs through the exchange, among the network nodes, of the information on their dynamical state, those links in which information traffic is larger
should be considered as the most critical. To account for this, we considered different measures of edge centrality.

The first one is edge degree calculated as $k_{i}+k_{j}-2$, where $k_{i}$ and $k_{j}$
are the node degrees of $i$ and $j$. The larger is the edge degree
the more critical is the edge, so first we remove edges with larger
edge degree.

Edge degree, however, only considers one-hop information exchanges,
whereas information in a network is transmitted through the many existing
paths from node to node. If one limits to consider shortest paths,
edge centrality can be measured by edge betweeness centrality (EBC)
defined as

\begin{equation}
EBC(e)=\sum\limits _{v_{i}\in V}\sum\limits _{v_{j}\in V}\dfrac{\rho(v_{i},e,v_{j})}{\rho(v_{i},v_{j})},\label{eq:EBC}
\end{equation}

\noindent where $\rho(v_{i},e,v_{j})$ is the number of shortest paths
from node $v_{i}$ to $v_{j}$ passing through $e$ and $\rho(v_{i},v_{j})$
the total number of shortest paths between nodes $v_{i}$ and $v_{j}$.
The larger is the value of the EBC, the more critical is the edge
and so has to be removed first.

If information is assumed to flow not only through shortest paths,
then one can takes into account two other measures of edge centrality
as recently introduced in \cite{estrada2015communicability}: the
communicability function and the communicability angle. The first
is defined as

\begin{equation}
G_{ij}=\sum\limits _{k=0}^{\infty}\dfrac{(\mathrm{A}^{k})_{ij}}{k!}=(e^{\mathrm{A}})_{ij}.
\end{equation}

The communicability function is calculated for the network edges,
that is, $\tilde{G}_{ij}$ where $(i,j)\in E$, providing a measure
to rank them: the smaller is $\tilde{G}_{ij}$ the poorer is the communicability
between $i$ and $j$, so the more critical is the edge. Therefore,
if one wants to remove the critical edges according to this measure,
those edges with small values of $\tilde{G}_{ij}$ should be removed
first.

Finally, the communicability angle is defined as

\begin{equation}
\theta_{ij}=\cos^{-1}\frac{G_{ij}}{\sqrt{G_{ii}G_{jj}}}.
\end{equation}

Restricting the analysis to the network edges, one has $\tilde{\theta}_{ij}$
where $(i,j)\in E$. The larger is $\tilde{\theta}_{ij}$ the poorer
is the communicability between $i$ and $j$, so the more critical
is the edge. Edges with high values of $\tilde{\theta}_{ij}$ should
be then removed first.

Given a graph $G$, for each of the edge centralities considered,
we have ranked the edges in decreasing order and removed a percentage
of those that do not disconnect the graph, creating a graph $G'$ with the same nodes of $G$ and edges $E'=E\backslash \{e\}$.
We have then compared the synchronization measures ($\lambda_{2}/N$
and $Q$) for network $G'$ and that of the
original network $G$. We have then considered LRIs in both $G$ and
$G'$ and again compared the synchronization measures for these
graphs. The analysis has been performed for each of the 54 real-world
networks of the dataset.

Fig.~\ref{fig:realnets2} shows how the synchronizability of real-world
networks, measured by the normalized parameters $\lambda_{2}^*/\lambda_2$ (Fig.
\ref{fig:realnets2}(a)-(c)) and $Q^*/Q$ (Fig.
\ref{fig:realnets2}(d)-(f)), is affected by LRIs (here, $\lambda_{2}^*$ and $Q^*$ indicate the values of $\lambda_{2}$ and $Q$ for the network after removing the edges). To facilitate the
visualization, the networks have been ordered according to the descending
values of $\lambda_2^*/\lambda_2$, where the removal of the links has been done according to decreasing values of the communicability angle.

We observe that the edge removal affects more significantly synchronization
in graphs with LRI than in the no-LRI scheme due to the fact that,
when a ``physical'' link is removed, then also its long-range influence
is removed. Synchronizability of the network without the removed edges
is still larger if LRIs are allowed than if it is not, which means
that the system can work better after the edge removal if such long-range
interactions are present than if not. Finally, comparing the different
edge removal methods, we notice that edge degree does not identify
important edges, as the removal by this index leaves the networks
almost unaffected in terms of the synchronizability. On the contrary,
it seems that the best identifiers of critical edges are those accounting
for effects going beyond nearest neighbors interactions, that is,
the edge betweenness (shortest paths) and the communicability angle
(all walks), because the removal by them affected the most the synchronizability.

\begin{figure}
\centering{}
\subfigure[]{\includegraphics[width=0.32\textwidth]{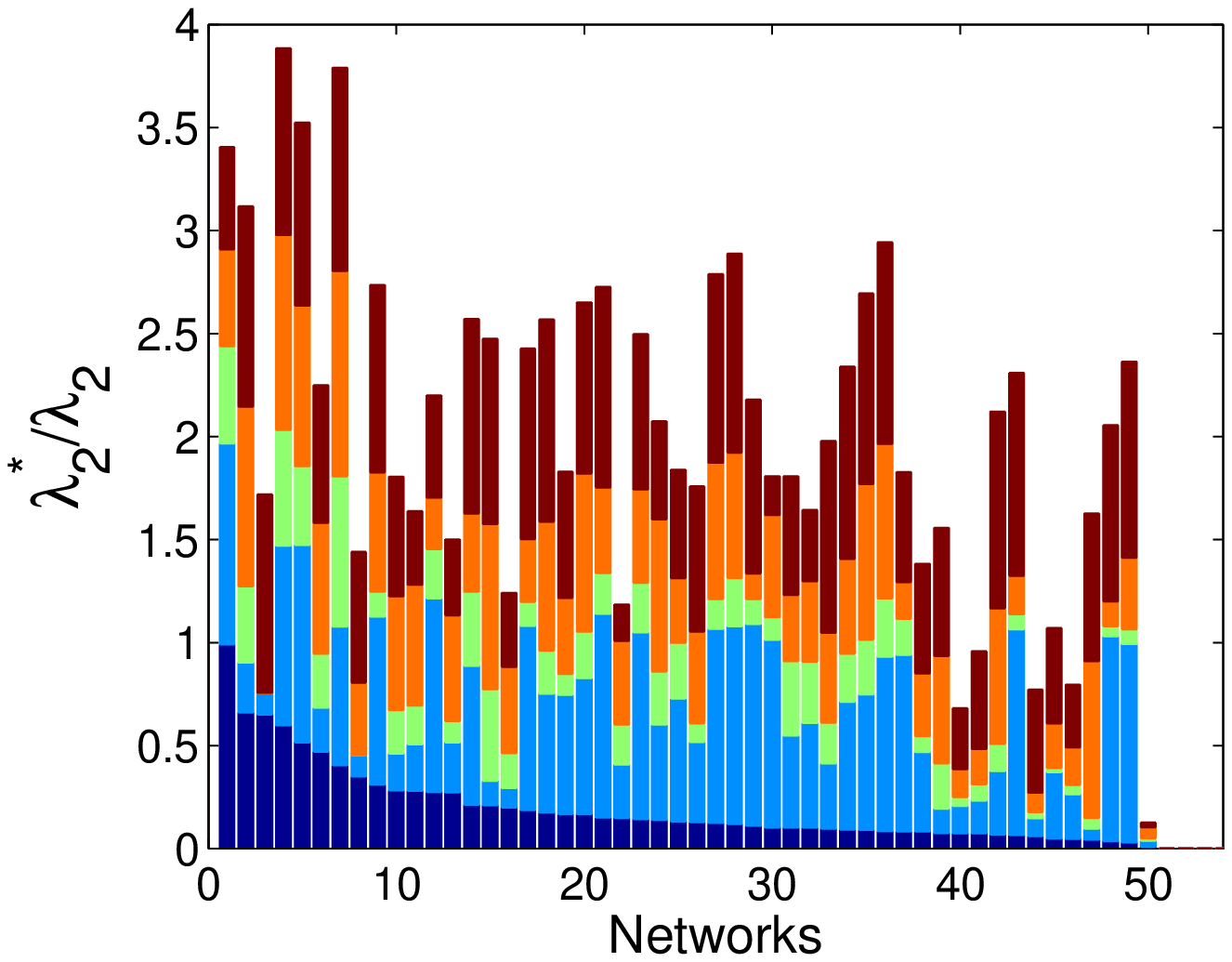}}
\subfigure[]{\includegraphics[width=0.32\textwidth]{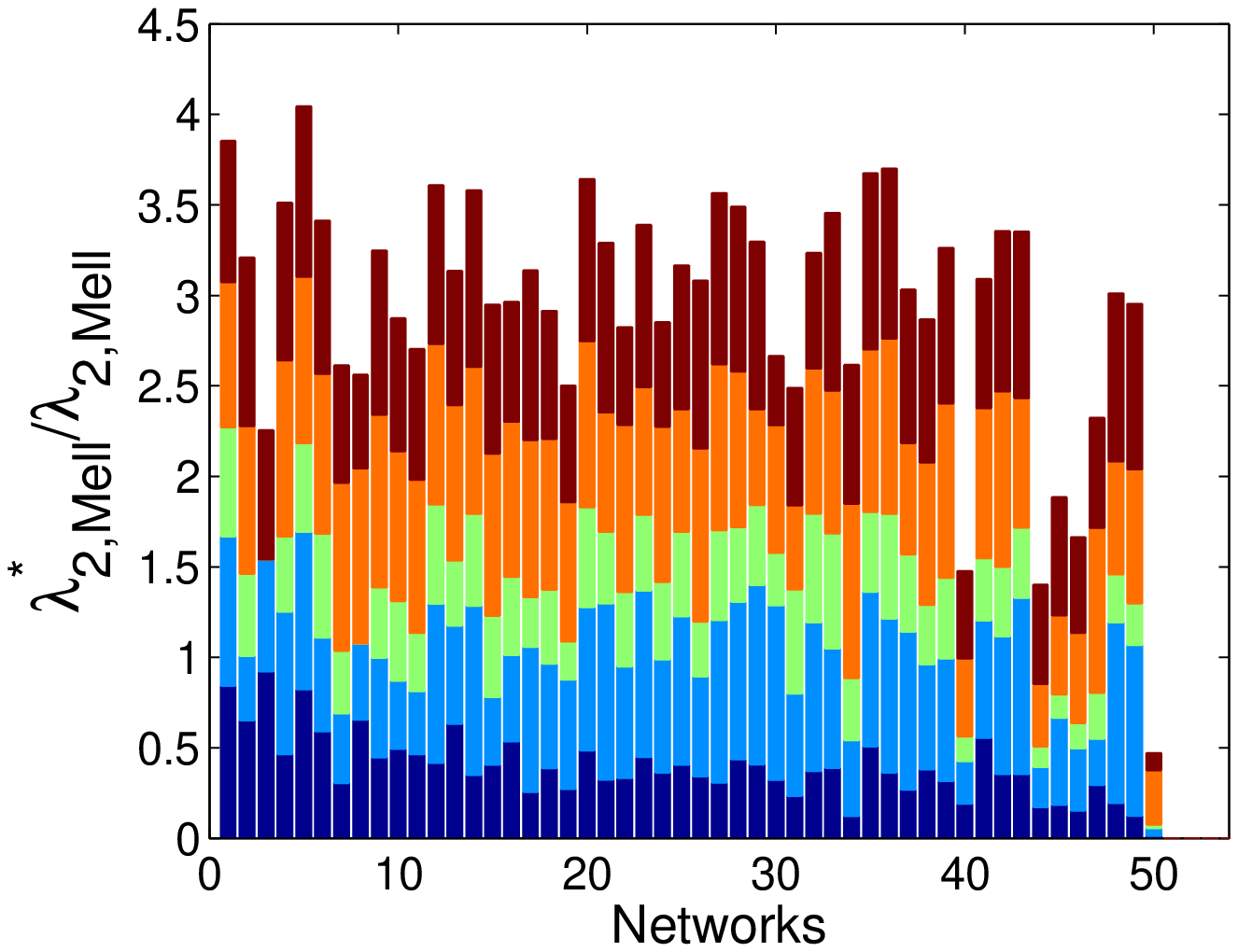}}
\subfigure[]{\includegraphics[width=0.32\textwidth]{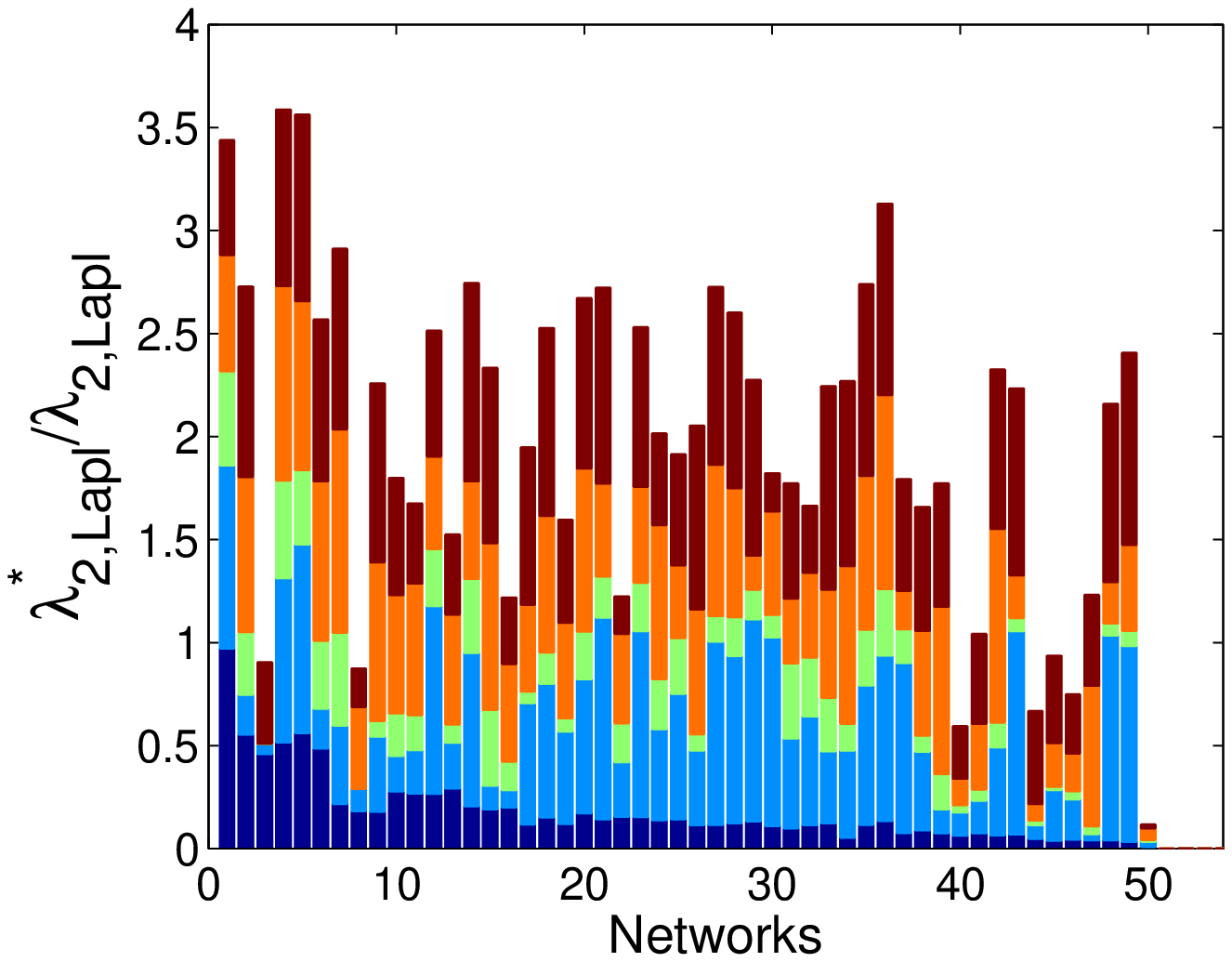}}\\
\subfigure[]{\includegraphics[width=0.32\textwidth]{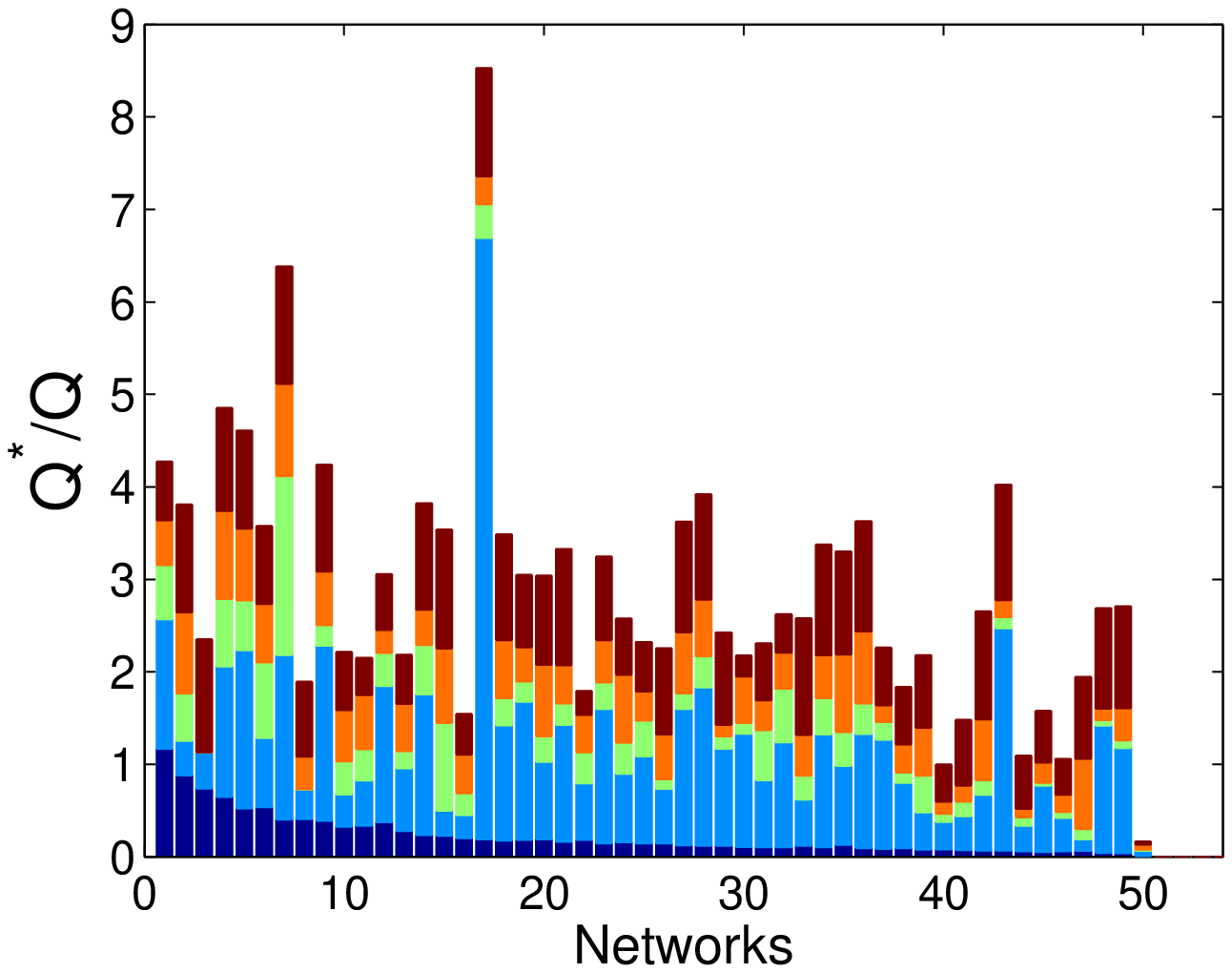}}
\subfigure[]{\includegraphics[width=0.32\textwidth]{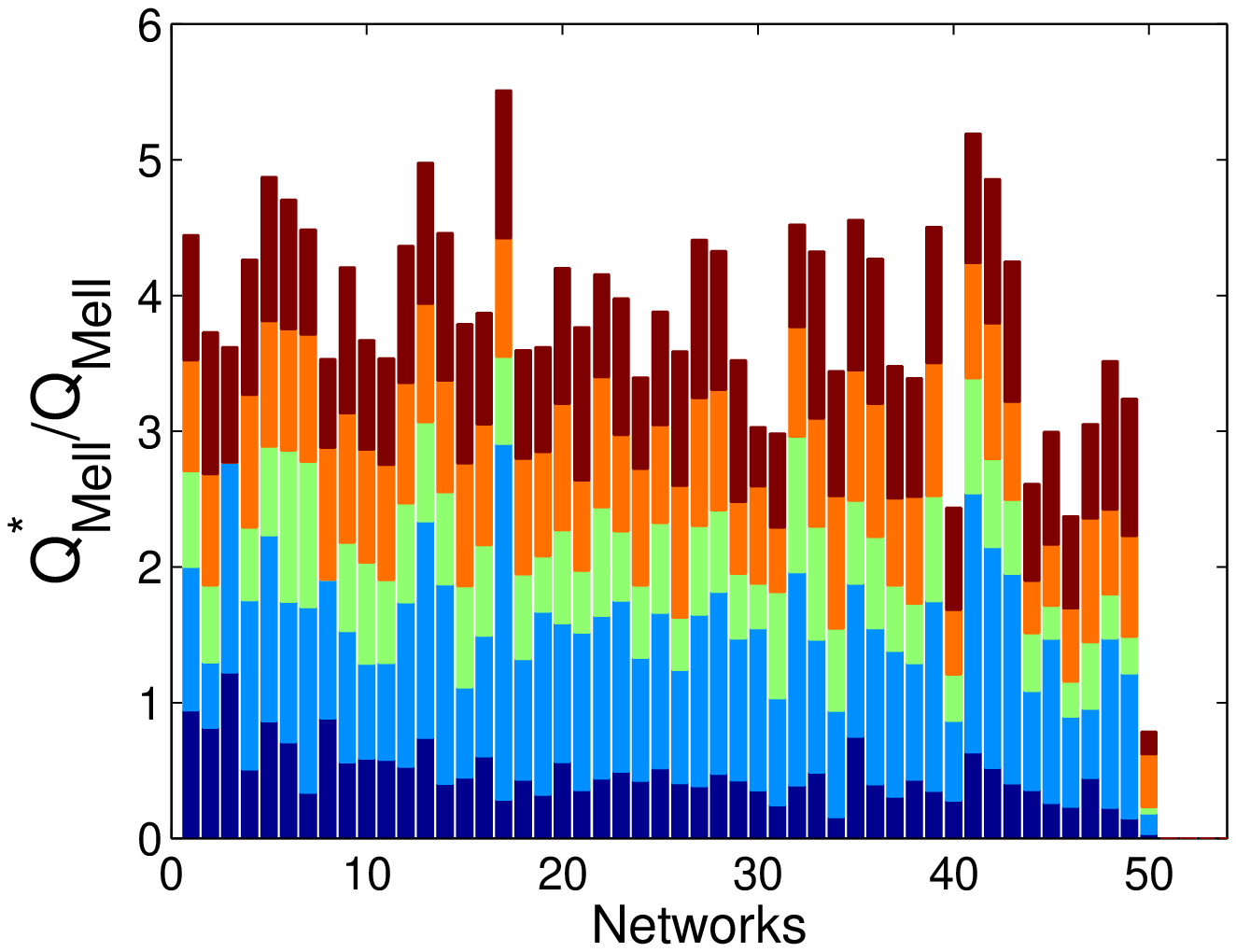}}
\subfigure[]{\includegraphics[width=0.32\textwidth]{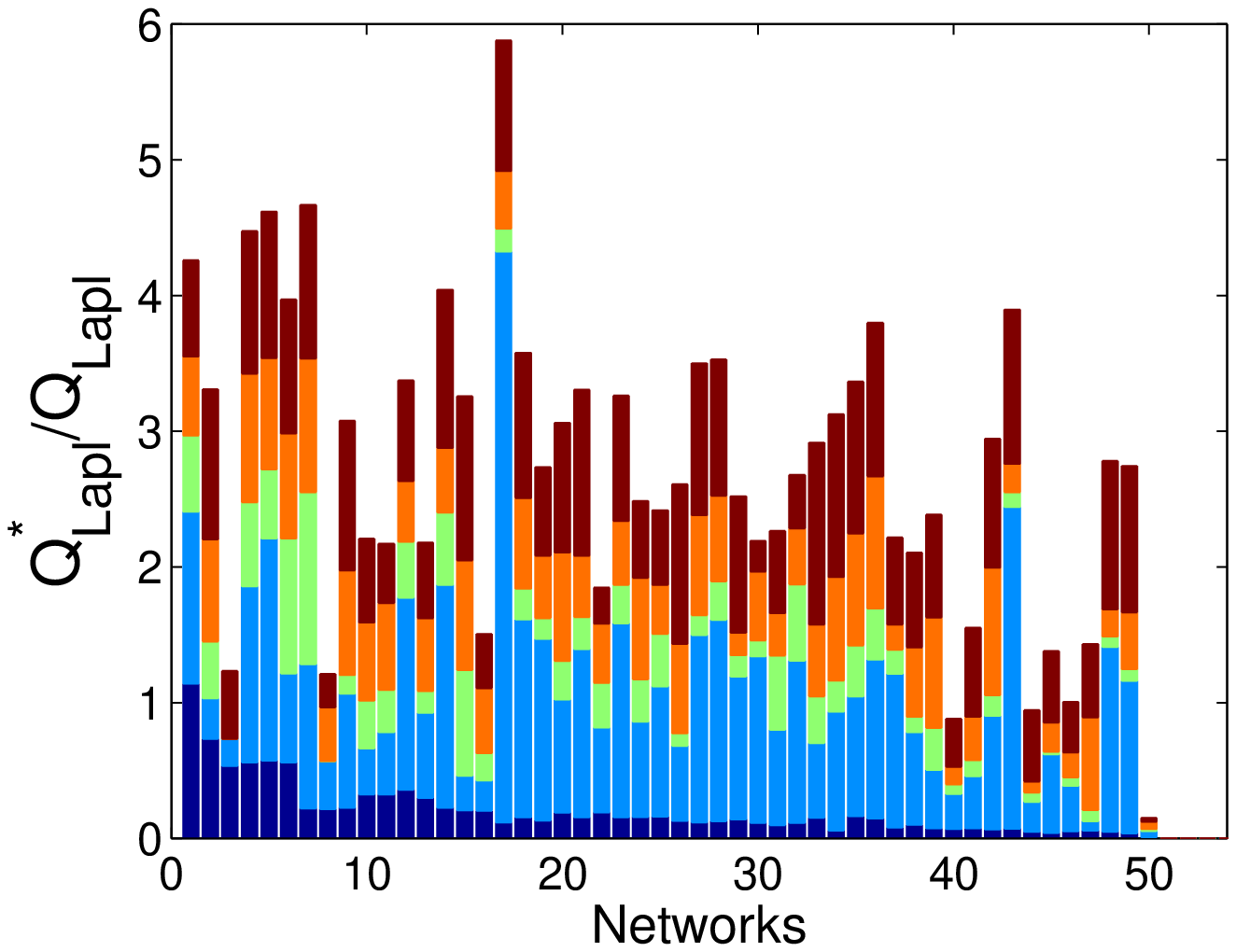}}\\
\caption{\label{fig:realnets2} Effect of edge removal on $\lambda_{2}^*/\lambda_{2}$ (a)-(c) and on $Q/Q^*$ (d)-(f) for real-world networks without LRIs (a) and (d) and with LRIs (b) and (e) (Mellin transform) and (c) and (f) (Laplace transform). The different bars refer to removal according to these edge centralities: from the bottom of each panel to the top, communicability angle (dark blue bars); edge degree (light blue bars); edge betweenness (light green bars); communicability function (orange bars); and random (red bars). The path Laplacians are obtained by applying the Mellin transform or the Laplace transform with $s=2$ or $\lambda=2$. In these results $G'$ has been obtained by removing the 20\% of the original links from $G$.}
\end{figure}

\section{Conclusions}

In this work we have studied the impact of long-range interactions (LRIs) on synchronization. To account for LRIs, oscillators are coupled through $d-$path Laplacian matrices into a model that generalizes the traditional one, based on the classical Laplacian of a graph, and includes it as a special case for $d=1$. As network synchronizability is essentially determined by the spectra of the Laplacian, and in particular by the smallest non-zero eigenvalue or by the ratio between the smallest non-zero eigenvalue and the largest one, we have thus studied these quantities for the graphs with LRIs.

Our theoretical considerations led to the conclusion that, increasing the weight of LRIs, any network of coupled oscillators, independently of the topology and the dynamics of the units, approaches the best possible scenario for synchronization. The specific path towards this limit condition depends on the network properties. We have performed numerical simulations both on real-world examples and on network models, and found results in perfect agreement with the theoretical expectations.

A significant result is the dependence of the impact of LRIs on some topological properties of interest, such as the diameter, the density and the degree distribution. We have found that a larger diameter, a smaller average degree or a higher heterogeneity in the degree distribution are all factors contributing to increase the positive impact of LRIs on synchronizability with respect to the scenario without LRIs.

Finally, we have studied the effects of the removal of critical edges in the absence and in the presence of LRIs, where criticality has been measured according to different topological measures of edge centrality. This analysis, carried out on real-world networks of different sizes and characteristics, points out scenarios common to all the examples considered. First of all, we have found that edge removal has a larger impact on the synchronizability of networks with LRIs rather than on the structures without LRIs. This is due to the fact that, when a physical link is removed, then also the long-range interactions it allowed are affected. However, synchronizability of the networks without the removed edges is still larger if LRIs are allowed than if not, confirming the general result that the presence of long-range interactions always favors synchronizability. Finally, we have observed that the most critical edges for synchronization are those having high values of the edge betweenness or of the communicability angles, i.e., are ranked according to measures taking into account effects going beyond nearest neighbors interactions.

The current approach can be extended to directed graphs. The first step in doing so is to generalize the $d$-path Laplacian matrices for such graphs. In this case there are two kinds of $d$-path Laplacians, namely the in- and the out-degree $d$-path Laplacians. That is, we should first generalize the in- and out-degrees to account for the indirect influence of nodes. The generalization is, however, straightforward. We only need to consider the number of directed shortest paths of length $d$ from the node $i$ to any other node in the graph to account for the out-$d$-degree of the node $i$. Similarly, we can define the in-$d$-degree of the node $i$ by taking the number of directed shortest paths of length $d$ from any node in the graph to the node $i$. For the synchronization dynamics we should consider the out-degree $d$-path Laplacians in a similar way as we have done in the current work for the undirected graphs. For directed graphs synchronizability has to be checked in the complex plane as the Laplacian eigenvalues are in general complex. However, provided that the graph is strongly connected, the generalization of $d$-path Laplacian matrix accounting for oriented paths still yields for $s\rightarrow 0$ ($\lambda \rightarrow 0$) a complete graph, thus favoring synchronizability. The mathematical properties of these in- and the out-degree $d$-path Laplacians are not trivial and deserve to be considered in a separate work.

\appendix
\section{Real-world network dataset}

The real-world networks used in this paper belong to different domains: ecological (includes food webs and ecosystems), social (networks of friendships, communication networks, corporate relationships), technological (internet, transport, software development networks), informational (vocabulary networks, citations) and biological (protein-protein interaction networks, transcriptional regulation networks). The dataset comprises networks of different sizes, ranging from $N=29$ to $N=4941$ nodes. The networks are listed in Table~\ref{tab:tab1}.

\begin{table}[ht]
\caption{Dataset of real-world networks: network name, domain, $N$ number of nodes, $m$ number of links, and reference. The networks have been ordered according to the descending values of $\lambda_{2}/N$.} % title of Table
\centering
\begin{footnotesize}
\begin{tabular}{c l c c c c c}
\hline\hline
No. & Dataset & Domain & N & m & $\lambda_2^*/\lambda_2$ & Ref.\\ [0.5ex]
%heading
\hline
1 & ReefSmall & ecological & 50 & 524 & 0.9904 & \cite{refn1}\\
2 & StMartin & ecological & 44 & 218 & 0.6589 &\cite{refn2}\\
3 & Internet-1998 & technological & 3522 & 6324 & 0.6503 &\cite{refn8}\\
4 & Trans-urchin & biological & 45 & 80 & 0.5974 &\cite{refn4}\\
5 & KSHV & biological & 50 & 122 & 0.5143 &\cite{refn5}\\
6 & ODLIS & informational & 2898 & 16381 & 0.4696 &\cite{refn9}\\
7 & Software Abi & technological & 1035 & 1736 & 0.4036 &\cite{refn7}\\
8 & Internet-1997 & technological & 3015 & 5156 & 0.3488 &\cite{refn8}\\
9 & USA Air 97 & technological & 332 & 2126 & 0.3095 &\cite{refn9}\\
10 & Sawmill & social & 36 & 62 & 0.2802 &\cite{refn10}\\
11 & Grassland & ecological & 75 & 113 & 0.2789 &\cite{refn11}\\
12 & World-trade & informational & 80 & 875 & 0.2726 &\cite{batagelj2006analysis} \\
13 & Trans-Ecoli & biological & 328 & 456 & 0.2715 &\cite{refn4}\\
14 & Neurons & biological & 280 & 1973 & 0.2122&\cite{refn14}\\
15 & Ythan1 & ecological & 134 & 597 & 0.2102&\cite{refn15}\\
16 & Hpyroli & biological & 710 & 1396 & 0.1978&\cite{refn16}\\
17 & Software Mysql & technological & 1480 & 4221 & 0.1850&\cite{refn7}\\
18 & YeastS & biological & 2224 & 7049 & 0.1736&\cite{refn18}\\
19 & Geom & social & 3621 & 9461 & 0.1666&\cite{batagelj2006analysis} \\
20 & Chesapeake & ecological & 33 & 72 & 0.1664&\cite{refn20}\\
21 & Benguela & ecological & 29 & 191 & 0.1490&\cite{refn21}\\
22 & Software Digital & technological & 150 & 198 & 0.1470&\cite{refn7}\\
23 & Hi-tech & social & 33 & 91 & 0.1420&\cite{refn23}\\
24 & Zackar & social & 34 & 78 & 0.1375&\cite{refn24}\\
25 & PRISON-Sym & social & 67 & 142 & 0.1309&\cite{refn25}\\
26 & ScotchBroom & ecological & 154 & 370 & 0.1269&\cite{refn26}\\
27 & PIN Ecoli & biological & 230 & 695 & 0.1231&\cite{refn27}\\
28 & Roget & informational & 994 & 3641 & 0.1171&\cite{refn28}\\
29 & Skipwith & ecological & 35 & 364 & 0.1098&\cite{refn29}\\
30 & StMarks & ecological & 48 & 221 & 0.1011&\cite{refn30}\\
31 & social3 & social & 32 & 80 & 0.1002&\cite{refn31}\\
32 & Malaria PIN & biological & 229 & 604 & 0.0999&\cite{refn32}\\
33 & Canton & ecological & 108 & 708 & 0.0943&\cite{refn33}\\
34 & Drugs & social & 616 & 2012 & 0.0917&\cite{moody01} \\
35 & BridgeBrook & ecological & 75 & 547 & 0.0896&\cite{refn35}\\
36 & Stony & ecological & 112 & 832 & 0.0844&\cite{refn36}\\
37 & Dolphins & social & 62 & 159 & 0.0826&\cite{refn37}\\
38 & ElVerde & ecological & 156 & 1441 & 0.0813&\cite{refn38}\\
39 & PIN Human & biological & 2783 & 6438 & 0.0733&\cite{rual2005towards} \\
40 & electronic2 & technological & 252 & 399 & 0.0730&\cite{refn40}\\
41 & Software-VTK & technological & 771 & 1369 & 0.0717&\cite{refn7}\\
42 & Transc-yeast & biological & 662 & 1062 & 0.0668&\cite{refn4}\\
43 & CorporatePeople & social & 1586 & 13126 & 0.0654&\cite{refn43}\\
44 & electronic3 & technological & 512 & 819 & 0.0595&\cite{refn40}\\
45 & Software-XMMS & technological & 971 & 1809 & 0.0471&\cite{refn7}\\
46 & electronic1 & technological & 122 & 189 & 0.0463&\cite{refn40}\\
47 & SmallW & informational & 233 & 994 & 0.0422&\cite{refn9}\\
48 & Shelf & ecological & 81 & 1476 & 0.0356&\cite{link2002does} \\
49 & Coachella & ecological & 30 & 261 & 0.0300&\cite{refn49}\\
50 & Power grid & technological & 4941 & 6594 & 0.0067&\cite{watts1998collective} \\
51 & ColoSPG & social & 324 & 347 & $1.8\cdot 10^{-13}$ &\cite{refn51}\\
52 & Drosophila PIN & biological & 3039 & 3715 & $3\cdot 10^{-14}$&\cite{giot2003protein} \\
53 & Pin- Bsubtilis & biological & 84 & 98 & $1.7\cdot 10^{-15}$&\cite{refn53}\\
54 & PIN-Afulgidus & biological & 32 & 38 & $1.4\cdot 10^{-15}$&\cite{refn54}\\ [1ex] % [1ex] adds vertical space
\hline %inserts single line
\end{tabular}
\end{footnotesize}
\label{tab:tab1} % is used to refer this table in the text
\end{table}

%\section*{Acknowledgments}
%We would like to acknowledge the assistance of volunteers in putting
%together this example manuscript and supplement.

\bibliographystyle{plain}
\bibliography{MultiHopK}

\begin{thebibliography}{10}

\bibitem{moody01}
Data for this project were provided, in part, by {NIH} grants {DA12831} and
  {HD41877}, 2001.

\bibitem{refn9}
Pajek datasets, 2001.

\bibitem{refn28}
Roget's thesaurus of english words and phrases, project gutenberg, 2002.

\bibitem{albert2000error}
R{\'e}ka Albert, Hawoong Jeong, and Albert-L{\'a}szl{\'o} Barab{\'a}si.
\newblock Error and attack tolerance of complex networks.
\newblock {\em nature}, 406(6794):378--382, 2000.

\bibitem{anteneodo2003analytical}
Celia Anteneodo, Sandro E de~S Pinto, Ant{\^o}nio~M Batista, and Ricardo~L
  Viana.
\newblock Analytical results for coupled-map lattices with long-range
  interactions.
\newblock {\em Physical Review E}, 68(4):045202, 2003.

\bibitem{arenas2008synchronization}
Alex Arenas, Albert D{\'\i}az-Guilera, Jurgen Kurths, Yamir Moreno, and
  Changsong Zhou.
\newblock Synchronization in complex networks.
\newblock {\em Physics Reports}, 469(3):93--153, 2008.

\bibitem{refn36}
Daniel Baird and Robert~E Ulanowicz.
\newblock The seasonal dynamics of the chesapeake bay ecosystem.
\newblock {\em Ecological monographs}, 59(4):329--364, 1989.

\bibitem{batagelj2006analysis}
Vladimir Batagelj and Andrej Mrvar.
\newblock {\em Analysis of large networks}.
\newblock 2006.

\bibitem{boccaletti2006complex}
Stefano Boccaletti, Vito Latora, Yamir Moreno, Martin Chavez, and D-U Hwang.
\newblock Complex networks: Structure and dynamics.
\newblock {\em Physics reports}, 424(4):175--308, 2006.

\bibitem{refn18}
Dongbo Bu, Yi~Zhao, Lun Cai, Hong Xue, Xiaopeng Zhu, Hongchao Lu, Jingfen
  Zhang, Shiwei Sun, Lunjiang Ling, Nan Zhang, et~al.
\newblock Topological structure analysis of the protein--protein interaction
  network in budding yeast.
\newblock {\em Nucleic acids research}, 31(9):2443--2450, 2003.

\bibitem{refn27}
Gareth Butland, Jos{\'e}~Manuel Peregr{\'\i}n-Alvarez, Joyce Li, Wehong Yang,
  Xiaochun Yang, Veronica Canadien, Andrei Starostine, Dawn Richards, Bryan
  Beattie, Nevan Krogan, et~al.
\newblock Interaction network containing conserved and essential protein
  complexes in escherichia coli.
\newblock {\em Nature}, 433(7025):531--537, 2005.

\bibitem{chen2008network}
Guanrong Chen and Zhisheng Duan.
\newblock Network synchronizability analysis: A graph-theoretic approach.
\newblock {\em Chaos: An Interdisciplinary Journal of Nonlinear Science},
  18(3):037102, 2008.

\bibitem{refn20}
Robert~R Christian and Joseph~J Luczkovich.
\newblock Organizing and understanding a winter's seagrass foodweb network
  through effective trophic levels.
\newblock {\em Ecological modelling}, 117(1):99--124, 1999.

\bibitem{courtney2016generalized}
Owen~T Courtney and Ginestra Bianconi.
\newblock Generalized network structures: The configuration model and the
  canonical ensemble of simplicial complexes.
\newblock {\em Physical Review E}, 93(6):062311, 2016.

\bibitem{refn43}
Gerald~F Davis, Mina Yoo, and Wayne~E Baker.
\newblock The small world of the american corporate elite, 1982-2001.
\newblock {\em Strategic organization}, 1(3):301--326, 2003.

\bibitem{de2013mathematical}
Manlio De~Domenico, Albert Sol{\'e}-Ribalta, Emanuele Cozzo, Mikko Kivel{\"a},
  Yamir Moreno, Mason~A Porter, Sergio G{\'o}mez, and Alex Arenas.
\newblock Mathematical formulation of multilayer networks.
\newblock {\em Physical Review X}, 3(4):041022, 2013.

\bibitem{duan2007complex}
Zhisheng Duan, Guanrong Chen, and Lin Huang.
\newblock Complex network synchronizability: Analysis and control.
\newblock {\em Physical Review E}, 76(5):056103, 2007.

\bibitem{duan2008network}
Zhisheng Duan, Chao Liu, and Guanrong Chen.
\newblock Network synchronizability analysis: The theory of subgraphs and
  complementary graphs.
\newblock {\em Physica D: Nonlinear Phenomena}, 237(7):1006--1012, 2008.

\bibitem{estrada2012path}
Ernesto Estrada.
\newblock Path laplacian matrices: introduction and application to the analysis
  of consensus in networks.
\newblock {\em Linear Algebra and its Applications}, 436(9):3373--3391, 2012.

\bibitem{estrada2012structure}
Ernesto Estrada.
\newblock {\em The structure of complex networks: theory and applications}.
\newblock Oxford University Press, 2012.

\bibitem{estrada2017path}
Ernesto Estrada, Ehsan Hameed, Naomichi Hatano, and Matthias Langer.
\newblock Path laplacian operators and superdiffusive processes on graphs. i.
  one-dimensional case.
\newblock {\em Linear Algebra and its Applications}, 523:307--334, 2017.

\bibitem{estrada2017centralities}
Ernesto Estrada and Grant Ross.
\newblock Centralities in simplicial complexes.
\newblock {\em arXiv preprint arXiv:1703.03641}, 2017.

\bibitem{estrada2013peer}
Ernesto Estrada and Eusebio Vargas-Estrada.
\newblock How peer pressure shapes consensus, leadership, and innovations in
  social groups.
\newblock {\em Scientific Reports}, 2013.

\bibitem{estrada2015communicability}
Ernesto Estrada, Eusebio Vargas-Estrada, and Hiroyasu Ando.
\newblock Communicability angles reveal critical edges for network consensus
  dynamics.
\newblock {\em Physical Review E}, 92(5):052809, 2015.

\bibitem{refn8}
Michalis Faloutsos, Petros Faloutsos, and Christos Faloutsos.
\newblock On power-law relationships of the internet topology.
\newblock In {\em ACM SIGCOMM computer communication review}, volume~29, pages
  251--262. ACM, 1999.

\bibitem{giot2003protein}
Loic Giot, Joel~S Bader, C~Brouwer, Amitabha Chaudhuri, Bing Kuang, Y~Li,
  YL~Hao, CE~Ooi, Brian Godwin, E~Vitols, et~al.
\newblock A protein interaction map of drosophila melanogaster.
\newblock {\em science}, 302(5651):1727--1736, 2003.

\bibitem{refn30}
Lloyd Goldwasser and Jonathan Roughgarden.
\newblock Construction and analysis of a large caribbean food web.
\newblock {\em Ecology}, 74(4):1216--1233, 1993.

\bibitem{gomez2006scale}
Jes{\'u}s G{\'o}mez-Garde{\~n}es and Yamir Moreno.
\newblock From scale-free to erdos-r{\'e}nyi networks.
\newblock {\em Physical Review E}, 73(5):056124, 2006.

\bibitem{holme2012temporal}
Petter Holme and Jari Saram{\"a}ki.
\newblock Temporal networks.
\newblock {\em Physics reports}, 519(3):97--125, 2012.

\bibitem{huang2009generic}
Liang Huang, Qingfei Chen, Ying-Cheng Lai, and Louis~M Pecora.
\newblock Generic behavior of master-stability functions in coupled nonlinear
  dynamical systems.
\newblock {\em Physical Review E}, 80(3):036204, 2009.

\bibitem{refn15}
M~Huxham, S~Beaney, and D~Raffaelli.
\newblock Do parasites reduce the chances of triangulation in a real food web?
\newblock {\em Oikos}, pages 284--300, 1996.

\bibitem{kivela2014multilayer}
Mikko Kivel{\"a}, Alex Arenas, Marc Barthelemy, James~P Gleeson, Yamir Moreno,
  and Mason~A Porter.
\newblock Multilayer networks.
\newblock {\em Journal of complex networks}, 2(3):203--271, 2014.

\bibitem{refn23}
David Krackhardt.
\newblock The ties that torture: Simmelian tie analysis in organizations.
\newblock {\em Research in the Sociology of Organizations}, 16(1):183--210,
  1999.

\bibitem{refn32}
Douglas~J LaCount, Marissa Vignali, Rakesh Chettier, Amit Phansalkar, Russell
  Bell, Jay~R Hesselberth, Lori~W Schoenfeld, Irene Ota, Sudhir Sahasrabudhe,
  Cornelia Kurschner, et~al.
\newblock A protein interaction network of the malaria parasite plasmodium
  falciparum.
\newblock {\em Nature}, 438(7064):103--107, 2005.

\bibitem{lentz2013unfolding}
Hartmut~HK Lentz, Thomas Selhorst, and Igor~M Sokolov.
\newblock Unfolding accessibility provides a macroscopic approach to temporal
  networks.
\newblock {\em Physical Review Letters}, 110(11):118701, 2013.

\bibitem{li2006phase}
Xiang Li.
\newblock Phase synchronization in complex networks with decayed long-range
  interactions.
\newblock {\em Physica D: Nonlinear Phenomena}, 223(2):242--247, 2006.

\bibitem{refn16}
Chung-Yen Lin, Chia-Ling Chen, Chi-Shiang Cho, Li-Ming Wang, Chia-Ming Chang,
  Pao-Yang Chen, Chen-Zen Lo, and Chao~A Hsiung.
\newblock hp-dpi: Helicobacter pylori database of protein
  interactomes\-embracing experimental and inferred interactions.
\newblock {\em Bioinformatics}, 21(7):1288--1290, 2005.

\bibitem{link2002does}
Jason Link.
\newblock Does food web theory work for marine ecosystems?
\newblock {\em Marine ecology progress series}, 230:1--9, 2002.

\bibitem{lu2004characterizing}
Jinhu Lu, Xinghuo Yu, Guanrong Chen, and Daizhan Cheng.
\newblock Characterizing the synchronizability of small-world dynamical
  networks.
\newblock {\em IEEE Transactions on Circuits and Systems I: Regular Papers},
  51(4):787--796, 2004.

\bibitem{refn37}
David Lusseau, Karsten Schneider, Oliver~J Boisseau, Patti Haase, Elisabeth
  Slooten, and Steve~M Dawson.
\newblock The bottlenose dolphin community of doubtful sound features a large
  proportion of long-lasting associations.
\newblock {\em Behavioral Ecology and Sociobiology}, 54(4):396--405, 2003.

\bibitem{refn25}
Duncan MacRae.
\newblock Direct factor analysis of sociometric data.
\newblock {\em Sociometry}, 23(4):360--371, 1960.

\bibitem{marodi2002synchronization}
Mate Marodi, Francesco d'Ovidio, and Tamas Vicsek.
\newblock Synchronization of oscillators with long range interaction: Phase
  transition and anomalous finite size effects.
\newblock {\em Physical Review E}, 66(1):011109, 2002.

\bibitem{refn2}
Neo~D Martinez.
\newblock Artifacts or attributes? effects of resolution on the little rock
  lake food web.
\newblock {\em Ecological Monographs}, 61(4):367--392, 1991.

\bibitem{refn11}
Neo~D Martinez, Bradford~A Hawkins, Hassan~Ali Dawah, and Brian~P Feifarek.
\newblock Effects of sampling effort on characterization of food-web structure.
\newblock {\em Ecology}, 80(3):1044--1055, 1999.

\bibitem{refn26}
J~Memmott, ND~Martinez, and JE~Cohen.
\newblock Predators, parasitoids and pathogens: species richness, trophic
  generality and body sizes in a natural food web.
\newblock {\em Journal of Animal Ecology}, 69(1):1--15, 2000.

\bibitem{refn10}
Judd~H Michael and Joseph~G Massey.
\newblock Modeling the communication network in a sawmill.
\newblock {\em Forest Products Journal}, 47(9):25, 1997.

\bibitem{refn4}
Ron Milo, Shalev Itzkovitz, Nadav Kashtan, Reuven Levitt, Shai Shen-Orr, Inbal
  Ayzenshtat, Michal Sheffer, and Uri Alon.
\newblock Superfamilies of evolved and designed networks.
\newblock {\em Science}, 303(5663):1538--1542, 2004.

\bibitem{refn40}
Ron Milo, Shai Shen-Orr, Shalev Itzkovitz, Nadav Kashtan, Dmitri Chklovskii,
  and Uri Alon.
\newblock Network motifs: simple building blocks of complex networks.
\newblock {\em Science}, 298(5594):824--827, 2002.

\bibitem{refn54}
Michael Motz, Ingo Kober, Charles Girardot, Eva Loeser, Ulrike Bauer, Michael
  Albers, Gerd Moeckel, Eric Minch, Hartmut Voss, Christian Kilger, et~al.
\newblock Elucidation of an archaeal replication protein network to generate
  enhanced pcr enzymes.
\newblock {\em Journal of Biological Chemistry}, 277(18):16179--16188, 2002.

\bibitem{refn7}
Christopher~R Myers.
\newblock Software systems as complex networks: Structure, function, and
  evolvability of software collaboration graphs.
\newblock {\em Physical Review E}, 68(4):046116, 2003.

\bibitem{newman2003structure}
Mark~EJ Newman.
\newblock The structure and function of complex networks.
\newblock {\em SIAM review}, 45(2):167--256, 2003.

\bibitem{refn53}
Philippe Noirot and Marie-Fran{\c{c}}oise Noirot-Gros.
\newblock Protein interaction networks in bacteria.
\newblock {\em Current opinion in microbiology}, 7(5):505--512, 2004.

\bibitem{refn1}
Silvia Opitz.
\newblock {\em Trophic interactions in Caribbean coral reefs}, volume 1085.
\newblock WorldFish, 1996.

\bibitem{pecora1998master}
Louis~M Pecora and Thomas~L Carroll.
\newblock Master stability functions for synchronized coupled systems.
\newblock {\em Physical Review Letters}, 80(10):2109, 1998.

\bibitem{refn35}
Gary~A Polis.
\newblock Complex trophic interactions in deserts: an empirical critique of
  food-web theory.
\newblock {\em The American Naturalist}, 138(1):123--155, 1991.

\bibitem{refn51}
John~J Potterat, L~Phillips-Plummer, Stephen~Q Muth, RB~Rothenberg,
  DE~Woodhouse, TS~Maldonado-Long, HP~Zimmerman, and JB~Muth.
\newblock Risk network structure in the early epidemic phase of hiv
  transmission in colorado springs.
\newblock {\em Sexually transmitted infections}, 78(suppl 1):i159--i163, 2002.

\bibitem{refn38}
Douglas~P Reagan and Robert~B Waide.
\newblock {\em The food web of a tropical rain forest}.
\newblock University of Chicago Press, 1996.

\bibitem{rodrigues2016kuramoto}
Francisco~A Rodrigues, Thomas K~DM Peron, Peng Ji, and J{\"u}rgen Kurths.
\newblock The kuramoto model in complex networks.
\newblock {\em Physics Reports}, 610:1--98, 2016.

\bibitem{rogers1996phase}
Jeffrey~L Rogers and Luc~T Wille.
\newblock Phase transitions in nonlinear oscillator chains.
\newblock {\em Physical Review E}, 54(3):R2193, 1996.

\bibitem{rual2005towards}
Jean-Fran{\c{c}}ois Rual, Kavitha Venkatesan, Tong Hao, Tomoko
  Hirozane-Kishikawa, Am{\'e}lie Dricot, Ning Li, Gabriel~F Berriz, Francis~D
  Gibbons, Matija Dreze, Nono Ayivi-Guedehoussou, et~al.
\newblock Towards a proteome-scale map of the human protein--protein
  interaction network.
\newblock {\em Nature}, 437(7062):1173--1178, 2005.

\bibitem{strogatz2004sync}
Steven Strogatz.
\newblock {\em Sync: The emerging science of spontaneous order}.
\newblock Penguin UK, 2004.

\bibitem{refn33}
CR~Townsend.
\newblock Disturbance, resource supply, and food-web architecture in streams.
\newblock {\em Ecology Letters}, 1:200--209, 1998.

\bibitem{refn5}
Peter Uetz, Yu-An Dong, Christine Zeretzke, Christine Atzler, Armin Baiker,
  Bonnie Berger, Seesandra~V Rajagopala, Maria Roupelieva, Dietlind Rose, Even
  Fossum, et~al.
\newblock Herpesviral protein networks and their interaction with the human
  proteome.
\newblock {\em Science}, 311(5758):239--242, 2006.

\bibitem{refn49}
Philip~H Warren.
\newblock Spatial and temporal variation in the structure of a freshwater food
  web.
\newblock {\em Oikos}, pages 299--311, 1989.

\bibitem{watts1998collective}
Duncan~J Watts and Steven~H Strogatz.
\newblock Collective dynamics of 'small-world' networks.
\newblock {\em nature}, 393(6684):440--442, 1998.

\bibitem{refn14}
John~G White, Eileen Southgate, J~Nichol Thomson, and Sydney Brenner.
\newblock The structure of the nervous system of the nematode caenorhabditis
  elegans.
\newblock {\em Philos Trans R Soc Lond B Biol Sci}, 314(1165):1--340, 1986.

\bibitem{refn21}
Peter Yodzis.
\newblock Diffuse effects in food webs.
\newblock {\em Ecology}, 81(1):261--266, 2000.

\bibitem{refn29}
Peter Yodzis.
\newblock Diffuse effects in food webs.
\newblock {\em Ecology}, 81(1):261--266, 2000.

\bibitem{refn24}
Wayne~W Zachary.
\newblock An information flow model for conflict and fission in small groups.
\newblock {\em Journal of anthropological research}, 33(4):452--473, 1977.

\bibitem{refn31}
Leslie~D Zeleny.
\newblock Adaptation of research findings in social leadership to college
  classroom procedures.
\newblock {\em Sociometry}, 13(4):314--328, 1950.

\end{thebibliography}

\end{document}